\newtheorem{theorem}{Theorem}[section]
\newtheorem{lemma}[theorem]{Lemma}
\newtheorem{corollary}[theorem]{Corollary}
\newtheorem{claim}{Claim}
\newtheorem{proposition}[theorem]{Proposition}
\newcommand{\independentcutset}{\textsc{Independent Cutset}}
\newcommand{\dist}{\operatorname{dist}}
\newcommand{\calT}{\mathcal{T}}
\newcommand{\calO}{\mathcal{O}}
\newcommand{\cut}{\operatorname{cut}}
\newcommand{\version}[2]{#1}
\newcommand{\QED}{} 
\begin{document}
\onehalfspace

\title{Exact and Parameterized Algorithms for the \\ Independent Cutset Problem}
\author{Johannes Rauch$^1$ \and Dieter Rautenbach$^1$ \and U\'{e}verton S. Souza$^2$}
\date{}

\maketitle
\vspace{-10mm}
\begin{center}
	{\small 
		$^1$ Institute of Optimization and Operations Research, Ulm University, Ulm, Germany\\
		$^2$ Instituto de Computa\c{c}\~{a}o, Universidade Federal Fluminense, Niter\'{o}i, Brazil\\
		\texttt{$\{$johannes.rauch,dieter.rautenbach$\}$@uni-ulm.de},
		\texttt{ueverton@ic.uff.br}
	}
\end{center}

\begin{abstract}
The \independentcutset{} problem asks whether there is a set of vertices in a given graph that is both independent and a cutset.
Such a problem is \NP{}-complete even when the input graph is planar and has maximum degree five.
In this paper, we first present a $\calO^*(1.4423^{n})$-time algorithm to compute a minimum independent cutset (if any).
Since the property of having an independent cutset is MSO$_1$-expressible, our main results are concerned with structural parameterizations for the problem considering parameters incomparable with clique-width.
We present \FPT{}-time algorithms for the problem considering the following parameters:
the dual of the maximum degree, the dual of the solution size, the size of a dominating set (where a dominating set is given as an additional input), the size of an odd cycle transversal, the distance to chordal graphs, and the distance to $P_5$-free graphs. 
We close by introducing the notion of $\alpha$-domination, which allows us to identify more fixed-parameter tractable and polynomial-time solvable cases.

\medskip\noindent
\textbf{Keywords:}
Exact algorithms; parameterized algorithms; independent cutset
\end{abstract}

\section{Introduction} \label{chap-intro}
\version{Connectivity of graphs and the existence of vertex or edge cutsets with certain properties are fundamental topics within graph theory.}{}
The {\sc Matching Cutset} problem is a well-studied problem in the literature, both from a structural and from an algorithmic point of view.
It asks whether a graph $G$ admits a set of edges $M$ such that $G-M$ is disconnected and no two distinct edges of $M$ are incident.
A natural variation of this problem is obtained by replacing the word ``edges'' by ``vertices'' and the word ``incident''  by ``adjacent'' in the previous problem definition.
Doing this yields the \independentcutset{} problem, which is also known as the {\sc Stable Cutset} problem.
\version{A formal definition together with more preliminaries is given in Section~\ref{chap-pre}.}{}
Tucker~\cite{tucker1983coloring} studied \independentcutset{} in the context of perfect graphs and graph colorings in 1983.
In 1993, Corneil and Fonlupt~\cite{corneil1993stable} explicitly asked for the complexity of \independentcutset{}. 
They studied the problem in the context of perfect graphs, too.

It is not hard to see that a graph $G$ with minimum degree at least two has a matching cutset if and only if the line graph of $G$ has an independent cutset.
Therefore, the first \NP-completeness proof of \independentcutset{} is due to Chv\'{a}tal~\cite{chvatal1984recognizing}, who presented in 1984 the first \NP-completeness proof for the {\sc Matching Cutset} problem.
Brandstädt, Dragan, Le and Szymczak~\cite{BrDrLeSz} showed in 2000 that \independentcutset{} stays \NP-complete, even when restricted to $K_4$-free graphs.
Note that the problem is trivial on $K_3$-free graphs, since the neighborhood of any vertex constitutes an independent cutset.
They also concluded that it is \NP-complete on perfect graphs.
Le and Randerath~\cite{LeRa} proved in 2003 that \independentcutset{} is \NP-complete on 5-regular line graphs of bipartite graphs.
In 2008, Le, Mosca and Müller~\cite{LeMoMu} showed that the problem is \NP-complete on planar graphs with maximum degree five.

On the other hand, several polynomial time solvable cases have been identified.
In 2002, Chen and Yu~\cite{chen_yu_2002} answered a question by Caro in the affirmative by showing that all graphs with $n$ vertices and at most $2n-4$ edges admit an independent cutset.
Their proof can be used for a polynomial time algorithm that finds such a set.
Le and Pfender~\cite{LePf} characterized in 2013 the extremal graphs having $2n-3$ edges but no independent cutset.
In particular, they showed that \independentcutset{} can be decided for graphs with $n$ vertices and $2n-3$ edges in polynomial time. 
Le, Mosca and Müller~\cite{LeMoMu} also showed that the problem can be decided in polynomial time for claw-free graphs of maximum degree 4, $\{\text{claw}, K_4\}$-free graphs, and claw-free planar graphs.
The general case for maximum degree four is still open.
Some more polynomial time solvable cases can be found in~\cite{BrDrLeSz}.

Regarding the parameterized complexity of \independentcutset{},
Marx, O'Sullivan and Razgon~\cite{MaOsRa} in 2013 considered the \textsc{Independent $s$-$t$ Cutset} problem: Given a graph $G$ and two vertices $s$ and $t$ of $G$, the question is whether there exists an independent cutset that separates $s$ and $t$ in $G$.
They showed that the problem of finding a minimum independent $s$-$t$ cutset in a graph can be solved in $\calO(f(k) \cdot (n+m))$ time, where $n$ is the number of vertices of the input graph, $m$ is the number of edges of the input graph, and $k$ is an upper bound on the solution size. 
They transform the input graph to a graph of treewidth bounded by $g(k)$, where $g$ is some function only depending on $k$.
The transformed graph preserves all $s$-$t$ cutsets of size at most $k$ of the input graph.
An application of Courcelle's Theorem~\cite{Co} then shows the fixed-parameter tractability.
Beside that, to the best of our knowledge, there is no other work about parameterized algorithms for \independentcutset{} in the literature.

Moreover, although a $\calO^*(2^n)$ time algorithm for \independentcutset{} is trivial, there is also no discussion on more efficient exact exponential-time algorithms.
Motivated by this, we present an $\calO^*(3^{n/3})$ time algorithm for \independentcutset{} in Section~\ref{chap_exact}.
It is based on iterating through all maximal independent sets of a graph.
Note that $3^{1/3} < 1.4423$.
In the same section we show how to adapt this algorithm to compute a minimum independent cutset in the same time (if there is one).

From the parameterized complexity point of view, an algorithmic meta-theorem of Courcelle, Makowsky and Rotics~\cite{Co00} states that any problem expressible in monadic second-order logic (MSO$_1$) can be solved in $\calO(f(cw) \cdot n)$ time, where $cw$ is the clique-width of the input graph.
Originally this required a clique-width expression as part of the input. 
This restriction was removed when Oum and Seymour~\cite{OUM2006514} gave an \FPT{} algorithm, parameterized by the clique-width of the input graph, that finds a $2^{\mathcal{O}(cw)}$-approximation of an optimal clique-width expression. 
Since the property of having an independent cutset can be expressed in MSO$_1$, it holds that \independentcutset{} is in \FPT{} concerning clique-width parameterization.
Therefore, our focus in this work is on structural parameters that measure the distance from the input instance to some ``trivial class'' that is relevant to the problem and does not have bounded clique-width.
\version{We present several parameters for which \independentcutset{} is fixed-parameter tractable in Section~\ref{chap_param}.
First we consider two dual parameters in Section~\ref{sec_dual}, the dual of the maximum degree and the dual of the solution size.
Then we prove our most important result in Section~\ref{sec_dom}, where we consider a variant of \independentcutset{} with a dominating set as additional input, and the cardinality of the dominating set is the parameter.
Next we measure the distance of the input graph to three graph classes, and consider the distance as a parameter.
The graph classes in question are bipartite graphs in Section~\ref{sec_bipartite}, chordal graphs in Section~\ref{sec_chordal}, and $P_5$-free graphs in Section~\ref{sec_p5}.
Finally we generalize the results of Section~\ref{sec_p5} in Section~\ref{sec_gen_p5}.
With our results we were also able to identify more polynomial time solvable cases.}{}

\version{}{We use the $\calO^*$-notation to suppress polynomial factors in the $\calO$-notation.
}

\section{Preliminaries} \label{chap-pre}

We introduce important notions and notations in this section.
Unless stated otherwise, we consider only finite, simple, loopless and undirected graphs,
and use standard terminology.
Let $G$ be a graph and let $v$ be a vertex of $G$.
We denote the \emph{neighborhood} of $v$ in $G$ by $N_G(v)$.
The \emph{closed neighborhood} $N_G[v]$ of $v$ in $G$ is $N_G(v) \cup \{v\}$.
Let $S$ be a set of vertices of $G$.
Then $N_G[S]$ is the union of all closed neighborhoods of vertices in $S$, and $N_G(S) = N_G[S] \setminus S$.
We denote the graph induced by $S$ with $G[S]$.
We write $G-S$ for the graph $G[V(G) \setminus S]$.
The set $S$ is a \emph{cutset} of $G$ if $G-S$ is disconnected. It is an \emph{independent set} of $G$ if its vertices are pairwise nonadjacent in $G$.
An \emph{independent cutset} of $G$ is both a cutset and an independent set of $G$.
Let $A$ and $B$ be two sets of vertices of $G$, and assume that $S$ is a cutset of $G$.
We say that $S$ \emph{splits} $A$ in $G$ if the elements of $A \setminus S$ belong to at least two different components of $G - S$.
We say that $S$ \emph{separates} $A$ and $B$ in $G$ if $A \setminus S$ and $B \setminus S$ belong to different components of $G-S$.
A \emph{vertex cover of $G$} is a set of vertices $X$ such that for every edge $uv \in E(G)$ we have $u \in X$ or $v \in X$.
A \emph{dominating set of $G$} is a set of vertices $X$ such that every vertex of $G$ is in $X$ or has a neighbor in $X$.
Let $H$ be a graph and let $\mathcal{H}$ be a graph class.
We say that $G$ is \emph{$H$-free} if $G$ does not contain $H$ as an induced subgraph.
We say that $G$ is \emph{$\mathcal{H}$-free} if $G$ is $H$-free for every $H \in \mathcal{H}$.

Throughout this article, we consider the following decision problem, which we will parameterize appropriately for our results.
\begin{center} \fbox{
\begin{tabular}{l}
\independentcutset{} \\
Instance: A connected graph $G$. \\
Question: Does $G$ have an independent cutset?
\end{tabular}}
\end{center}
\noindent Although we formulated \independentcutset{} as a decision problem, our algorithms will also return an independent cutset of $G$ in the affirmative case, or are easily adaptable to do so.


A \emph{tree decomposition} of a graph $G$ is a pair $\calT = (T, \{X_t:t \in V(T)\})$ such that the following conditions hold. The graph $T$ is a tree and $X_t$, $t \in V(T)$, is a set of vertices of $G$. We refer to the vertices of $T$ exclusively as \emph{nodes}. The sets $X_t$ are called \emph{bags}. Furthermore,
\begin{enumerate}[(T1)]
\item $V(G) = \bigcup_{t \in V(T)} X_t$, \label{T1}
\item for every edge $uv \in E(G)$ there is a bag $X_t$ such that $u, v \in X_t$, and \label{T2}
\item for every vertex $v \in V(G)$, the set $\{t \in V(T): v \in X_t\}$ induces a (connected) subtree $T_v$ of $T$. \label{T3}
\end{enumerate}
The \emph{width} of a tree decomposition equals $\max_{t \in V(T)} |X_t| - 1$.
The \emph{treewidth} of a graph $G$ is the minimum possible width of a tree decomposition of $G$.

It is helpful to consider the trees of tree decompositions as rooted trees. 
This means we distinguish a special node $r$ of $T$. We call $r$ the \emph{root} of $T$. 
This introduces a natural parent-child and ancestor-descendant relation on the tree.
A node of $T$ that has degree 1 and is not the root is a \emph{leaf}.

A tree decomposition $\calT = (T, \{X_t:t \in V(T)\})$ is \emph{nice} if
\begin{enumerate}[(N1)]
\item $X_r = \emptyset$ and $X_\ell = \emptyset$ for every leaf $\ell$ of $T$, and
\item each node of $T$ is of one of the following types:
\begin{itemize}
\item (Introduce node) A node $t$ with exactly one child $t'$ such that $X_t = X_{t'} \cup \{v\}$ for some $v \notin X_{t'}$. We say that $v$ is introduced by $t$.
\item (Forget node) A node $t$ with exactly one child $t'$ such that $X_t = X_{t'} \setminus \{w\}$ for some $w \in X_{t'}$.
We say that $w$ is forgotten by $t$.
\item (Join node) A node $t$ with exactly two children $t'$ and $t''$ such that $X_t = X_{t'} = X_{t''}$.
\end{itemize}
\end{enumerate}
Given a tree decomposition of width $k$, a nice tree decomposition of width $k$ can be computed efficiently, as the following lemma shows.

\begin{lemma}[Lemma 7.4 in~\cite{parameterized_algorithms_book}] \label{lem_nice}
If a graph $G$ admits a tree decomposition of width at most $k$, then it also admits a nice tree decomposition of width at most $k$. Moreover, given a tree decomposition $\calT$ of $G$ of width at most $k$, one can in time $\calO(k^2 \cdot \max(n(G), n(T))$ compute a nice tree decomposition of $G$ of width at most $k$ that has at most $\calO(kn(G))$ nodes.
\end{lemma}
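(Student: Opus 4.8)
The plan is the textbook construction, carried out as a few local surgeries on the tree. First I would make the decomposition \emph{small}: as long as there is an edge $tt'$ of $T$ with $X_t \subseteq X_{t'}$, contract it and keep the bag $X_{t'}$; this changes neither the width nor the validity of (T1)--(T3). A standard induction on $n(G)$ shows that a small tree decomposition has at most $n(G)$ nodes, so afterwards $n(T) \le n(G)$. Searching for a contractible edge is one scan over the edges, each test costing $\calO(k)$, and there are at most $n(T)$ contractions, which is where the $\calO(k^2 \cdot \max(n(G), n(T)))$ bound comes from; from here on the tree has $\calO(n(G))$ nodes.

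Next I would root $T$ at an arbitrary node $r$ and reduce the branching: whenever a node $t$ has children $t_1, \dots, t_d$ with $d \ge 3$, I replace it by a path of $d-1$ new nodes, all carrying the bag $X_t$, turning it into a chain of join-type nodes whose $i$-th member has $t_i$ as one child and the next chain node (or $t_{d-1}, t_d$ at the end) as its other child. After this, every node has at most two children, and whenever a node has two children I declare it a join node and keep its bag.

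Then I would repair every tree edge so that its endpoints form a legal parent/child pair. Consider an edge from a parent with bag $X$ to a child with bag $Y$; for a join node I first set $Y := X$ and push the old subtree root (with its original bag) one level down. Replace the edge by a chain of bags $X = Y_0, Y_1, \dots, Y_m = Y$ obtained by deleting the vertices of $X \setminus Y$ one at a time and then inserting the vertices of $Y \setminus X$ one at a time; consecutive bags differ in exactly one vertex, so each new node is an introduce or forget node, and every intermediate bag is contained in $X$ or in $Y$, hence has size at most $k+1$. The one nontrivial point is that (T3) survives: a vertex $v \in X \setminus Y$ cannot occur anywhere in the old subtree below the child, because in the input decomposition the bags containing $v$ induce a subtree and $v \notin Y$; symmetrically for $v \in Y \setminus X$; so each vertex still occupies a connected set of nodes. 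Finally, I would attach above $r$ a chain of forget nodes that deletes the vertices of $X_r$ one by one until the new root has the empty bag, and below each leaf $\ell$ a chain of introduce nodes deleting the vertices of $X_\ell$ until the new leaf has the empty bag, giving (N1).

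After all this, every node is an introduce node, a forget node, a join node, or a leaf/root with the empty bag, the width never increased, and each of the $\calO(n(G))$ surviving nodes and edges spawned only $\calO(k)$ new nodes, so the resulting nice tree decomposition has $\calO(k \, n(G))$ nodes and is produced within the stated time. The main obstacle --- and the only place requiring care --- is the edge-repair step: one must route each chain through $X \cap Y$ (not through $X \cup Y$) to keep every bag of size at most $k+1$, and one must invoke (T3) of the input decomposition to see that inserting or deleting a vertex along a single chain cannot break the connectivity of that vertex's node set elsewhere in the tree.
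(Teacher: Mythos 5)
This lemma is quoted from the literature (Lemma 7.4 in the cited \emph{Parameterized Algorithms} book) and the paper gives no proof of its own; your construction is the standard one from that source --- contract to a small decomposition, binarize, repair each edge by a forget/introduce chain routed through $X\cap Y$, and pad the root and leaves with empty bags --- and it is essentially correct, including the (T3) argument for why the inserted chains cannot conflict with occurrences of a vertex elsewhere in the tree. No gaps worth flagging.
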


A \emph{clique tree} of a graph $G$ is a tree decomposition of $G$ where the bags are exactly the maximal cliques of $G$.
A graph is \emph{chordal} if every induced cycle has length exactly three. 
We will use the following fact about chordal graphs.

\begin{theorem}[Theorem 3.1 in~\cite{blair_peyton_1993}] \label{thm_cliquetree}
A connected graph $G$ is chordal if and only if there exists a clique tree of $G$. If a connected graph $G$ is chordal, a clique tree of $G$ can be determined in polynomial time.
\end{theorem}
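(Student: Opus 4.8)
\emph{Proof proposal.} This is a classical characterization (going back to Gavril, Buneman, and Walter), so the plan is just to establish the two directions of the equivalence and then observe that the construction is efficient. For the ``only if'' direction, suppose $\mathcal{C} = (T, \{X_t : t \in V(T)\})$ is a tree decomposition of $G$ in which every bag is a maximal clique, and assume for contradiction that $v_1 v_2 \cdots v_\ell v_1$ is an induced cycle with $\ell \geq 4$, so that $v_1 v_3 \notin E(G)$. By (T3) every vertex $u$ of $G$ induces a subtree $T_u$ of $T$, and by (T2) together with the fact that all bags are cliques, for distinct $u,w$ we have $T_u \cap T_w \neq \emptyset$ if and only if $uw \in E(G)$. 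In particular $T_{v_1}$ and $T_{v_3}$ are disjoint, so there is a unique path $P$ in $T$ joining them. Since $T_{v_2}$ meets both $T_{v_1}$ and $T_{v_3}$, connectedness of $T_{v_2}$ forces $P \subseteq V(T_{v_2})$. On the other hand, $T_{v_4}, \dots, T_{v_\ell}$ have a connected union (consecutive ones meet) that also meets $T_{v_1}$ and $T_{v_3}$, so this union contains $P$ as well; hence some $T_{v_j}$ with $4 \le j \le \ell$ meets $P$ and therefore meets $T_{v_2}$, forcing $v_2 v_j \in E(G)$. This contradicts the cycle being induced, so $G$ is chordal.

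For the ``if'' direction, let $G$ be connected and chordal. The plan is to use a perfect elimination ordering $\sigma = (v_1, \dots, v_n)$, which exists for chordal graphs and is computable in linear time by Lex-BFS or maximum cardinality search, and to build a clique tree incrementally. Process the vertices in reverse order $v_n, v_{n-1}, \dots, v_1$, maintaining at each stage a clique tree of the subgraph induced by the already-processed vertices. When $v_i$ is inserted, its neighborhood $S$ among the earlier-processed vertices is a clique; if $S \cup \{v_i\}$ is not contained in any current bag, create a new bag $S \cup \{v_i\}$ and attach it to (a node whose bag is) a bag containing $S$, and otherwise add $v_i$ to a bag containing $S \cup \{v_i\}$ and delete any bag that has thereby become a proper subset of another. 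One then checks that (T1)--(T3) are preserved at every step and that the final bags are exactly the maximal cliques of $G$. Since a chordal graph has at most $n$ maximal cliques and each of the $n$ steps is polynomial, the construction runs in polynomial time; alternatively one may obtain the clique tree as a maximum-weight spanning tree of the clique--intersection graph with edge weights $|C \cap C'|$.

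I expect the main obstacle to be the bookkeeping in the ``if'' direction: one must argue with care that the incremental process never leaves a non-maximal bag in the final tree, and that the subtree condition (T3) is maintained precisely in the case where a new vertex is merged into an existing bag rather than spawning a fresh one (this is where the perfect elimination property is essential, since it guarantees the relevant $S$ is already a clique sitting in a single current bag). The subtree argument in the ``only if'' direction is also a little delicate for cycles longer than four; phrasing it, as above, via the connected union of the subtrees along one arc of the cycle together with the uniqueness of the path $P$ between two disjoint subtrees of a tree keeps it uniform in $\ell$.
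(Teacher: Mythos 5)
The paper does not prove this statement at all: it is imported verbatim as Theorem~3.1 of Blair and Peyton's survey and used as a black box in Section~\ref{sec_chordal}, so there is no in-paper argument to compare against. Your write-up is the standard classical proof, and the ``only if'' direction is complete and correct: the observation that, with clique bags, $T_u\cap T_w\neq\emptyset$ iff $uw\in E(G)$, combined with the fact that in a tree any connected subgraph meeting two disjoint subtrees must contain the unique minimal path between them, cleanly forces the chord $v_2v_j$ and works uniformly in $\ell$. The ``if'' direction is a correct algorithm in outline (reverse perfect elimination ordering, incremental clique-tree maintenance, or alternatively the maximum-weight spanning tree of the clique-intersection graph, which is exactly Blair--Peyton's own characterization), but note two points. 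First, your case split is vacuous as written: $S\cup\{v_i\}$ is \emph{never} contained in a current bag, because $v_i$ has not yet been placed anywhere; the intended dichotomy is whether $S$ itself equals a current bag (then absorb $v_i$ into it, as that clique is no longer maximal) or is properly contained in one (then spawn the new bag $S\cup\{v_i\}$ and attach it there). Second, the verification that (T1)--(T3) are preserved and that the final bags are precisely the maximal cliques is deferred with ``one then checks''; since you yourself identify this bookkeeping as the main obstacle, the proposal as it stands is a correct and standard proof sketch rather than a finished proof. For a cited textbook result this is a reasonable level of detail, and nothing in the approach would fail when the details are filled in.
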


\version{Dallard, Milanič and Štorgel~\cite{dallard_milanic_storgel_2022} introduced a special kind of tree decomposition.
Given a nonnegative integer $\ell$, an \emph{$\ell$-refined tree decomposition} of a graph $G$ is a pair $\hat\calT = (T, \{(X_t, U_t):t \in V(T)\})$ such that $\calT = (T, \{X_t:t \in V(T)\})$ is a tree decomposition, and for every $t \in V(T)$ we have $U_t \subseteq X_t$ and $|U_t| \leq \ell$. 
We say $\calT$ is the \emph{underlying tree decomposition} of $\hat\calT$. 
We extend any concept defined for tree decompositions to $\ell$-refined tree decompositions by considering them on the underlying tree decomposition.
The residual independence number of $\hat\calT$ is defined as $\max_{t \in V(T)} \alpha(G[X_t \setminus U_t])$ and denoted by $\hat\alpha(\hat\calT)$.

}{}

\section{An Exact Exponential Algorithm} \label{chap_exact}
In this section, we present a single-exponential time algorithm for finding a minimum independent cutset (if any).
It is structured as follows.
First, we present a preliminary structural result in Lemma~\ref{lem_maximal}.
Corollary~\ref{cor_exact_general} is an immediate consequence of Lemma~\ref{lem_maximal}.
By combining these results with some known results from the literature, we are able to solve \independentcutset{} in $\calO^*(3^{n/3})$ time, which is given in Corollary~\ref{cor_exact}.
Furthermore, we get another polynomial-time solvable case, which is stated in Corollary~\ref{cor_2k2}.
We conclude this section with Lemma~\ref{lem_min_indcs} and Corollary~\ref{cor_min_indcs}, where we show how to adapt the algorithm to find a minimum independent cutset (if there is one).

We start with an easy but important structural observation.

\begin{lemma} \label{lem_maximal}
If a connected graph $G$ has an independent cutset $S$, then every independent set $S'\supseteq S$ is also a cutset of $G$.  
\end{lemma}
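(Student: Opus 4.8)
The plan is to exploit the fact that any independent superset $S' \supseteq S$ differs from $S$ only by vertices lying \emph{inside} the components of $G - S$, and to combine this with the connectivity of $G$ and the independence of $S'$.

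First I would fix notation for the components. Since $G$ is connected and $S$ is a cutset, $G - S$ has components $C_1,\dots,C_k$ with $k \ge 2$; note also $S \ne \emptyset$, since otherwise $G = G - S$ would be disconnected. By the definition of a component of $G - S$, there is no edge of $G$ joining $C_i$ and $C_j$ for $i \ne j$, and every edge of $G$ with exactly one endpoint in $C_i$ has its other endpoint in $S$.

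The key step is to show that $C_i \setminus S' \ne \emptyset$ for every $i$. This is where connectivity enters: since $G$ is connected and $\emptyset \ne C_i \subsetneq V(G)$, some vertex $v_i \in C_i$ has a neighbour outside $C_i$, and by the previous paragraph this neighbour $s_i$ lies in $S$, hence in $S'$. As $S'$ is independent and $v_i s_i \in E(G)$, we get $v_i \notin S'$, so $v_i \in C_i \setminus S'$.

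To conclude, observe that $G - S'$ has vertex set $\bigcup_{i=1}^{k}(C_i \setminus S')$, where each of the $k \ge 2$ parts is nonempty by the previous step, and there is no edge of $G$ between parts coming from different $C_i$'s (such edges were already absent in $G$). Therefore $G - S'$ is disconnected, so $S'$ is a cutset, and it is independent by hypothesis. The only delicate point is the middle step: one must rule out that enlarging $S$ to $S'$ swallows an entire component of $G - S$, and this is precisely where the independence of $S'$ (which excludes the ``boundary'' vertices of each $C_i$) together with the connectivity of $G$ (which guarantees that such boundary vertices exist) do the work.
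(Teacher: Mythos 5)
Your proof is correct and follows essentially the same route as the paper's: use connectivity of $G$ to find, in each component of $G-S$, a vertex adjacent to $S$, then use independence of $S'$ to conclude that this vertex survives in $G-S'$, so no component is swallowed. You simply spell out the final step (no edges between distinct components of $G-S$ persist in $G-S'$) that the paper leaves implicit.
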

\begin{proof}
Let $G$ be a connected graph, and let $S$ be an independent cutset of $G$. Let $S'\supset S$ be another independent set of $G$.
Since $G$ is connected and $S$ is a cutset, any component of $G-S$ has a vertex with a neighbor in $S$. This implies that any component of $G-S$ has a vertex not in $S'$. Thus, $S'$ is also a cutset of $G$. 
\QED \end{proof}

Lemma~\ref{lem_maximal} implies that it suffices to consider maximal independent sets to decide \independentcutset{}. This is fact is used in Corollary~\ref{cor_exact_general}.

\begin{corollary} \label{cor_exact_general}
Let $G$ be a connected graph with $n$ vertices.
If there is an algorithm that enumerates all maximal independent sets of $G$ in $\calO^*(f(n))$ time, then \independentcutset{} with $G$ as input is solvable in $\calO^*(f(n))$ time.
\end{corollary}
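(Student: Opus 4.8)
The plan is to use Lemma~\ref{lem_maximal} to reduce the decision problem to a bounded enumeration task. First I would invoke the given enumeration algorithm to list all maximal independent sets $S_1, \dots, S_t$ of $G$ in $\calO^*(f(n))$ time; in particular, since the algorithm runs in that time, $t = \calO^*(f(n))$, so the list is not too long.

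Next, for each enumerated maximal independent set $S_i$, I would check in polynomial time whether $S_i$ is a cutset of $G$: remove $S_i$ from $G$ and test connectivity of $G - S_i$ (e.g., by a single graph search), returning \texttt{yes} (together with $S_i$ as a witness) as soon as some $S_i$ disconnects $G$. If none of them does, return \texttt{no}. The correctness of this procedure rests on Lemma~\ref{lem_maximal}: if $G$ has any independent cutset $S$, then extend $S$ to a maximal independent set $S'$; by the lemma $S'$ is still a cutset, and $S'$ appears in the enumeration, so the algorithm detects it. Conversely, any maximal independent set that is a cutset is by definition an independent cutset, so a \texttt{yes} answer is never spurious.

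For the running time, enumeration costs $\calO^*(f(n))$, and the loop performs at most $\calO^*(f(n))$ iterations, each doing a polynomial-time connectivity check, so the total is $\calO^*(f(n)) \cdot \mathrm{poly}(n) = \calO^*(f(n))$ by the convention that $\calO^*$ suppresses polynomial factors. I do not anticipate a genuine obstacle here — the only minor point to state carefully is that an algorithm with running time $\calO^*(f(n))$ cannot output more than $\calO^*(f(n))$ sets, so iterating over its output preserves the time bound; everything else is immediate from Lemma~\ref{lem_maximal} and the fact that connectivity is testable in linear time.
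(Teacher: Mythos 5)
Your proposal follows exactly the paper's argument: enumerate all maximal independent sets, test each for being a cutset in polynomial time, and appeal to Lemma~\ref{lem_maximal} for correctness (any independent cutset extends to a maximal independent set that is still a cutset). The paper states this in one line; your version just spells out the same steps in more detail, so it is correct and essentially identical.
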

\begin{proof}
This follows from Lemma~\ref{lem_maximal} and the fact that checking if a given set $S$ is a cutset of $G$ can be done in polynomial time.
\QED \end{proof}

By a result of Moon and Moser~\cite{moon_moser_1965}, a graph with $n$ vertices has $\calO(3^{n/3})$ maximal independent sets. 
Johnson, Yannakakis and Papadimitrou~\cite{johnson_et_al_1988} showed that all maximal independent sets can be enumerated with $\calO(n^3)$ delay. 
This, together with Corollary~\ref{cor_exact_general}, gives a fast exponential algorithm for the problem, and a graph class for which the problem is efficiently solvable.

\begin{corollary} \label{cor_exact}
\independentcutset{} can be solved in $\calO^*(3^{n/3})$ time.
\end{corollary}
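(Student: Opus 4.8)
The plan is to derive Corollary~\ref{cor_exact} as an immediate application of Corollary~\ref{cor_exact_general}, so the entire argument reduces to exhibiting an enumeration algorithm with the right running time. First I would invoke the classical theorem of Moon and Moser~\cite{moon_moser_1965}, which guarantees that any $n$-vertex graph has at most $3^{n/3}$ maximal independent sets; this is the combinatorial bound that makes $f(n) = 3^{n/3}$ the relevant quantity. Next I would recall the output-sensitive enumeration result of Johnson, Yannakakis and Papadimitriou~\cite{johnson_et_al_1988}, which produces all maximal independent sets with polynomial delay (concretely $\calO(n^3)$ per output). Combining these two facts, all maximal independent sets of $G$ can be listed in time $\calO(n^3 \cdot 3^{n/3}) = \calO^*(3^{n/3})$.

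Having established that the enumeration fits into the $\calO^*(f(n))$ framework with $f(n)=3^{n/3}$, I would then simply apply Corollary~\ref{cor_exact_general}, which says precisely that \independentcutset{} is solvable in $\calO^*(f(n))$ time whenever maximal independent sets can be enumerated in $\calO^*(f(n))$ time. This yields the claimed $\calO^*(3^{n/3})$ running time. Since Corollary~\ref{cor_exact_general} in turn rests on Lemma~\ref{lem_maximal} (it suffices to test maximal independent sets for the cutset property) together with the trivial observation that checking whether a fixed set is a cutset takes polynomial time, no further structural work is needed.

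The proof is essentially a bookkeeping exercise: there is no genuine obstacle, only the need to cite the two external results correctly and to observe that the polynomial delay times the Moon--Moser bound is absorbed by the $\calO^*$-notation. If I wanted to be self-contained rather than citing the delay bound, the mild subtlety would be ensuring that the enumeration does not use exponential \emph{space} while still running within the stated time; but the cited algorithm already handles this, so I would not belabor it. The one thing I would be careful to state explicitly is that $G$ is connected (as required by the problem definition and by Lemma~\ref{lem_maximal}), so that testing each enumerated set for being a cutset is meaningful.
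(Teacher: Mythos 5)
Your proposal is correct and matches the paper's argument exactly: the paper also combines the Moon--Moser bound of $3^{n/3}$ on the number of maximal independent sets with the polynomial-delay enumeration of Johnson, Yannakakis and Papadimitriou, and then applies Corollary~\ref{cor_exact_general}. No differences worth noting.
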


\begin{corollary}\label{cor_2k2}
\independentcutset{} on $2K_2$-free graphs can be solved in polynomial time.    
\end{corollary}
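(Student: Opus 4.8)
The plan is to obtain this as a direct consequence of Corollary~\ref{cor_exact_general}: it suffices to show that all maximal independent sets of a $2K_2$-free graph $G$ on $n$ vertices can be enumerated in time polynomial in $n$, i.e.\ that the hypothesis of that corollary is met with $f$ a polynomial.

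First I would bound the number of maximal independent sets. The cleanest route is to pass to the complement: $G$ is $2K_2$-free if and only if $\overline{G}$ is $C_4$-free, and the maximal independent sets of $G$ are precisely the maximal cliques of $\overline{G}$. It is a classical fact that a $C_4$-free graph on $n$ vertices has only $\calO(n^2)$ maximal cliques (equivalently, a $2K_2$-free graph has only $\calO(n^2)$ maximal independent sets). Alternatively one can reprove the polynomial bound directly in $G$ by a short branching argument that fixes a nonedge $uv$ and uses $2K_2$-freeness to show that every maximal independent set is pinned down by a bounded amount of data; but quoting the known bound keeps the argument short.

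Next I would turn this counting bound into a running-time bound by reusing the enumeration algorithm of Johnson, Yannakakis and Papadimitriou~\cite{johnson_et_al_1988} that is already invoked for Corollary~\ref{cor_exact}: it lists all maximal independent sets of an arbitrary $n$-vertex graph with $\calO(n^3)$ delay, so on a $2K_2$-free graph, where there are only $\calO(n^2)$ of them, it terminates in total time polynomial in $n$. Applying Corollary~\ref{cor_exact_general} with this polynomial $f$ then yields that \independentcutset{} on $2K_2$-free graphs is solvable in polynomial time, which is Corollary~\ref{cor_2k2}.

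The only ingredient that is not a routine assembly of results already present in the paper is the $\calO(n^2)$ bound on the number of maximal independent sets of $2K_2$-free graphs, so that — or rather, pinning down the cleanest reference for it — is where I expect the (modest) effort to go; everything else is immediate from Corollary~\ref{cor_exact_general}.
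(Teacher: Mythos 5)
Your proposal is correct and matches the paper's proof, which is exactly the one-line combination of Farber's $\calO(n^2)$ bound on the number of maximal independent sets of $2K_2$-free graphs with the polynomial-delay enumeration and Corollary~\ref{cor_exact_general}. Your complement-of-$C_4$-free reformulation is just a restatement of the same cited fact, so there is no substantive difference in approach.
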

\begin{proof}
$2K_2$-free graphs with $n$ vertices only have $\calO(n^2)$ maximal independent sets~\cite{farber_1989}.
\QED \end{proof}

We showed in the proof of Lemma~\ref{lem_maximal} that, in order to decide whether a graph admits an independent cutset, it suffices to consider all maximal independent sets.
In the following lemma we show how to obtain small independent cutsets from maximal ones. 

\begin{lemma}\label{lem_min_indcs}
Given a connected graph $G$ and an independent cutset $S'$ of $G$, one can compute in polynomial time the smallest independent cutset $S$ of $G$ contained in $S'$.
\end{lemma}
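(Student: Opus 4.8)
The plan is to reduce the problem to polynomially many minimum vertex‑cut computations in an auxiliary bipartite graph. Since $S'$ is an independent set, every subset of $S'$ is independent as well, so it suffices to find a minimum‑cardinality set $S \subseteq S'$ that is a cutset of $G$.

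Let $C_1, \dots, C_k$ be the connected components of $G - S'$. As $G$ is connected and $S'$ is a cutset, $k \geq 2$; moreover, by the definition of a component, there is no edge between distinct $C_i$ and $C_j$, each $G[C_i]$ is connected, and $N_G(C_i) \subseteq S'$. Form the auxiliary graph $G'$ by contracting each $C_i$ to a single vertex $c_i$ and deleting the resulting loops and parallel edges. Then $G'$ is a connected bipartite graph with parts $\{c_1, \dots, c_k\}$ and $S'$. For any $S \subseteq S'$, consider the map that sends every vertex of $C_i$ to $c_i$ and fixes every vertex of $S' \setminus S$. I would show that this map induces a bijection between the connected components of $G - S$ and those of $G' - S$: one direction holds because every edge of $G - S$ maps to an edge or a loop of $G' - S$, and the other because each $G[C_i]$ is connected and deleting a subset of $S'$ never disconnects a $C_i$ internally (as $C_i \cap S' = \emptyset$). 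Hence, for $S \subseteq S'$, the set $S$ is a cutset of $G$ if and only if it is a cutset of $G'$, equivalently if and only if $S$ puts two vertices $c_i$, $c_j$ with $i \neq j$ into different components of $G' - S$.

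Next I would argue that it is enough to fix $c_1$: if $S \subseteq S'$ separates $c_i$ from $c_j$ in $G'$, then the component of $G' - S$ containing $c_1$ misses at least one of $c_i, c_j$, so $S$ separates $c_1$ from that vertex. Therefore the minimum size of a cutset $S \subseteq S'$ of $G$ equals $\min_{2 \le j \le k} \kappa_j$, where $\kappa_j$ is the minimum size of a set $S \subseteq S'$ such that $c_1$ and $c_j$ lie in different components of $G' - S$. For each fixed $j$, both $\kappa_j$ and a witnessing set are obtained in polynomial time from a maximum‑flow computation in the standard vertex‑splitting gadget of $G'$ in which every $c_\ell$ receives infinite capacity and every vertex of $S'$ receives capacity $1$; such a minimum cut is finite (deleting $N_{G'}(c_1) \subseteq S'$ isolates $c_1$) and, by the capacities, consists only of vertices of $S'$. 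Iterating over $j \in \{2, \dots, k\}$ and returning the smallest witnessing set produces the desired $S$.

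I expect the only genuinely technical step to be the verification that the contraction map gives a bijection between the components of $G - S$ and those of $G' - S$; this is where the facts $N_G(C_i) \subseteq S'$ and the connectivity of each $G[C_i]$ are used, and it should be a short but careful case analysis. The remaining ingredients — inheriting independence from $S'$, reducing to separators of $c_1$ from some $c_j$, and the flow formulation of a minimum $S'$‑restricted vertex cut — are standard.
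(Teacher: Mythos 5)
Your proposal is correct, and it reaches the conclusion by a genuinely different (though related) route. The paper also begins by observing that every subset of $S'$ is independent and that only the interaction between $S'$ and the components of $G-S'$ matters, but it then encodes this interaction as a hypergraph $H$ whose vertices are the vertices of $S'$ together with the components of $G-S'$, and whose hyperedges record, for each $v\in S'$, the components adjacent to $v$; a single \emph{global} minimum edge cut of $H$ (computable in polynomial time by Queyranne's algorithm for symmetric submodular minimization) then yields the smallest cutset $S\subseteq S'$. You instead contract each component $C_i$ to a vertex $c_i$, obtain a bipartite graph $G'$ with parts $\{c_1,\dots,c_k\}$ and $S'$, prove that cutsets $S\subseteq S'$ of $G$ correspond exactly to sets separating some pair $c_i,c_j$ in $G'$, and then reduce to $k-1$ minimum $c_1$--$c_j$ vertex cuts restricted to $S'$ via the standard vertex-splitting max-flow gadget. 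Your key verifications are sound: since $S'$ is independent and $G$ is connected, every vertex of $S'\setminus S$ is adjacent to some $c_i$, so every component of $G'-S$ (and of $G-S$) contains some $c_i$, which justifies both the reduction to separating pairs $c_i,c_j$ and the further reduction to separating $c_1$ from some $c_j$; and since the $c_\ell$ are pairwise non-adjacent and carry infinite capacity while $N_{G'}(c_1)\subseteq S'$, each flow computation returns a finite cut inside $S'$. What each approach buys: yours uses only classical $s$--$t$ max-flow and is entirely self-contained, at the cost of $\calO(k)$ flow computations and a short component-bijection argument; the paper's is a one-shot reduction but leans on the heavier black box of hypergraph/symmetric-submodular minimum cut. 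Both run in polynomial time, so either establishes the lemma.
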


\begin{proof}
Let $G$ and $S'$ be as in the statement.
We construct a hypergraph $H$. The vertices of $H$ are the vertices of $S'$ and the components of $G-S'$. 
The hyperedges of $H$ are
\begin{align*}
S' \cup \{ K \,|\, \text{$K$ is a component of $G-S'$ adjacent to $v$} \} \text{ for every } v \in S'.
\end{align*}
It is easy to see that a minimum edge cut of $H$ corresponds to a smallest independent cutset $S$ being a subset of $S'$. 
Since a minimum edge cut can be computed in polynomial time on hypergraphs, see for example~\cite{queyranne_1995}, we can find such a set $S'$ in polynomial time.
\QED \end{proof}

The next corollary follows from Lemma~\ref{lem_maximal} and Lemma~\ref{lem_min_indcs}. 

\begin{corollary} \label{cor_min_indcs}
Given a connected graph $G$ that admits an independent cutset, a minimum independent cutset of $G$ can be found in $\calO^*(3^{n/3})$ time.
\end{corollary}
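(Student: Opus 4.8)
The plan is to combine the polynomial-delay enumeration of maximal independent sets with the subroutine of Lemma~\ref{lem_min_indcs}. First I would argue, using Lemma~\ref{lem_maximal}, that no minimum independent cutset is lost by restricting attention to subsets of maximal independent sets: if $S$ is any independent cutset of $G$ and we extend it to an arbitrary maximal independent set $S' \supseteq S$, then Lemma~\ref{lem_maximal} guarantees that $S'$ is again a cutset of $G$. Hence the algorithm is to enumerate all maximal independent sets $S'$ of $G$, discard those that are not cutsets, and for each remaining one compute via Lemma~\ref{lem_min_indcs} the smallest independent cutset contained in $S'$; the smallest set produced over all choices of $S'$ is returned.

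For correctness it remains to check that this returns a genuinely minimum independent cutset. Let $S$ be a minimum independent cutset of $G$ and let $S' \supseteq S$ be a maximal independent set containing it; by the previous paragraph $S'$ is a cutset, so it is one of the sets the algorithm processes, and Lemma~\ref{lem_min_indcs} applied to $S'$ returns an independent cutset contained in $S'$ of size at most $|S|$, hence of size exactly $|S|$ since $S$ is minimum. Thus the global minimum taken by the algorithm equals the size of a minimum independent cutset, and the algorithm actually outputs such a set.

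For the running time I would invoke the bound of Moon and Moser that a graph on $n$ vertices has $\calO(3^{n/3})$ maximal independent sets, together with the polynomial-delay listing algorithm of Johnson, Yannakakis and Papadimitriou, so that all maximal independent sets can be streamed in total time $\calO^*(3^{n/3})$ without storing them simultaneously. For each of the $\calO(3^{n/3})$ sets, testing whether it is a cutset and executing the polynomial-time procedure of Lemma~\ref{lem_min_indcs} costs only polynomial time, while we maintain the best cutset seen so far. Multiplying the number of enumerated sets by this polynomial per-set overhead yields the claimed $\calO^*(3^{n/3})$ bound.

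The only step that requires care is the reduction argument in the first paragraph, namely that every minimum independent cutset is contained in some maximal independent set that is itself a cutset; this is precisely what Lemma~\ref{lem_maximal} delivers. Beyond that, the proof is a routine bookkeeping combination of Lemma~\ref{lem_maximal} and Lemma~\ref{lem_min_indcs}, and I do not expect a genuine obstacle.
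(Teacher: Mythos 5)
Your proposal is correct and follows exactly the route the paper intends: the paper states that the corollary ``follows from Lemma~\ref{lem_maximal} and Lemma~\ref{lem_min_indcs}'', and your argument is precisely that combination, enumerating the $\calO(3^{n/3})$ maximal independent sets with polynomial delay and applying Lemma~\ref{lem_min_indcs} to each one that is a cutset. Nothing further is needed.
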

\section{Parameterized algorithms} \label{chap_param}

Intuitively, one could expect the following.
If a graph has few edges, then there always exists an independent cutset, and it is easy to find one.
This was made precise by Caro's Conjecture, and it was proved by Chen and Yu~\cite{chen_yu_2002}.
If a graph has many edges, then independent cutsets (if there are any) must have a specific structure, or there is no independent cutset at all.
This can be seen by the given parameterizations presented in this paper, under which \independentcutset{} is fixed-parameter tractable.
Indeed, all our parameterizations have in common that they are small if the input graph (or at least a part of it) is ``dense''.

The section is structured as follows.
In Subsection~\ref{sec_dual}, we consider the dual of the maximum degree and the dual of the solution size as a parameter.
What follows in Subsection~\ref{sec_dom} is our most important result, where we consider \independentcutset{} with a dominating set as an additional input, and the size of the dominating set is the parameter.
Then we consider the distance by vertex removals from the input graph to three different graph classes and take such distances as parameters in Subsections~\ref{sec_bipartite}, \ref{sec_chordal} and \ref{sec_p5}. 
The graph classes under consideration are bipartite graphs, chordal graphs, and $P_5$-free graphs.
Finally, we generalize the distance to $P_5$-free graphs results in Subsection~\ref{sec_gen_p5}.

\subsection{Dual parameterizations} \label{sec_dual}

We consider the dual of the maximum degree as a parameter in Theorem~\ref{thm_Deltaindcs}.
Actually, it also follows from Theorem~\ref{thm_domindcs} that the problem is fixed-parameter tractable with respect to this parameter, because the size of a minimum dominating set of a graph $G$ with $n$ vertices and maximum degree $\Delta$ is at most $n - \Delta$.
Nevertheless, we give a direct proof of this fact, since it comprises a faster algorithm.

\begin{theorem} \label{thm_Deltaindcs}
Let $G$ be the connected input graph with $n$ vertices and maximum degree $\Delta$, and let $\calO^*(f(n))$ be the running time of an algorithm enumerating all maximal independent sets of $G$.
It holds that \independentcutset{} can be solved in $\calO^*(2^k\cdot f(k))$ time, where $k = n - \Delta$.
\end{theorem}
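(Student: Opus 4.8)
The plan is to fix a vertex $v$ of maximum degree $\Delta$ and to argue that an independent cutset of $G$, if one exists, can be found by inspecting only the set $A := V(G)\setminus N_G[v]$, which has $|A| = n-\Delta-1 = k-1$ vertices; in particular one never has to enumerate subsets of the (possibly huge) set $N_G(v)$.

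First I would record the dichotomy for an independent cutset $S$ of $G$: either $v\in S$ or $v\notin S$. If $v\in S$, then independence of $S$ forces $S\cap N_G(v)=\emptyset$, hence $S\subseteq\{v\}\cup A$; so it suffices to run over all $2^{k-1}$ subsets $S_A\subseteq A$ and to check, in polynomial time, whether $\{v\}\cup S_A$ is an independent set that disconnects $G$, stopping as soon as one is. If $v\notin S$, let $C_v$ be the component of $G-S$ that contains $v$. Each $u\in N_G(v)$ lies in $S$ or, being adjacent to $v\notin S$, in $C_v$, so $N_G[v]\subseteq S\cup C_v$. Since $G-S$ has at least two components, some component $C\neq C_v$ is nonempty; then $C$ is disjoint from both $S$ and $N_G[v]$, so $C\subseteq A$, and $N_G(C)\subseteq S$ (every vertex bordering a full component of $G-S$ lies in $S$), whence $N_G(C)$ is independent.

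The key step is the converse of this last observation: for \emph{every} nonempty $C\subseteq A$ whose neighbourhood $N_G(C)$ is independent, the set $N_G(C)$ is an independent cutset of $G$. Indeed, $v\notin C$ (as $C\subseteq A$) and $v\notin N_G(C)$ (as $C$ contains no neighbour of $v$), while $C$ is nonempty and disjoint from $N_G(C)$; since $G-N_G(C)$ has no edge joining $C$ to the rest of the graph, it separates the nonempty set $C$ from $v$, so $N_G(C)$ is a cutset. Hence the case $v\notin S$ reduces to running over all $2^{k-1}$ subsets $C\subseteq A$ and testing in polynomial time whether $N_G(C)$ is independent.

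Running the two loops decides \independentcutset{}, and returns a witness, in time $\calO^*(2^{k-1})$, which is comfortably within the claimed $\calO^*(2^k\cdot f(k))$; in fact no enumeration of maximal independent sets is needed at all. I do not expect a genuine difficulty here: the only delicate point is the structural equivalence above—that in the case $v\notin S$ a whole component of $G-S$ must sit inside $A$, and that, conversely, $N_G(C)$ is automatically a cutset—whose proof rests on the small but essential fact that $v\notin C\cup N_G(C)$.
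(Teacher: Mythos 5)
Your proof is correct, but it takes a genuinely different route from the paper's for the case $v\notin S$. Both arguments open identically: fix a maximum-degree vertex $v$, handle cutsets containing $v$ by brute force over $\{v\}\cup A$ (the paper's $R\cup\{v\}$), and then treat cutsets avoiding $v$. For the latter, the paper guesses the subset $R'\subseteq A$ of vertices lying in the components of $G-S^*$ not containing $v$, deduces that $I=N_G(v)\cap N_G(R')$ is forced into $S^*$ and that no other neighbour of $v$ may enter $S^*$, contracts $I$, deletes $N_G[v]\setminus I$, and then enumerates maximal independent sets of the resulting $k$-vertex graph, invoking Lemma~\ref{lem_maximal} to extend back to a cutset of $G$ --- hence the $f(k)$ factor. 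You instead guess a single far-side component $C\subseteq A$ directly and observe that $N_G(C)$ is itself the witness: if $S$ is an independent cutset avoiding $v$, some component $C$ of $G-S$ lies entirely in $A$ and has $N_G(C)\subseteq S$ independent; conversely, for any nonempty $C\subseteq A$ with $N_G(C)$ independent, $N_G(C)$ separates $C$ from $v$ (using $v\notin C\cup N_G(C)$ and $N_G(C)\neq\emptyset$ by connectivity), so it is an independent cutset. The two directions make the test both complete and sound, and the whole algorithm runs in $\calO^*(2^k)$, which lies within the claimed $\calO^*(2^k\cdot f(k))$ and in fact improves on it unconditionally, since your argument dispenses with the maximal-independent-set enumeration (and hence with the dependence on $f$) altogether. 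Your approach is simpler and yields a strictly better bound; the paper's detour through Lemma~\ref{lem_maximal} buys nothing here that your direct component-neighbourhood argument does not already provide.
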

\begin{proof}
Let $G$ be a connected graph with $n$ vertices, let $v$ be a vertex of $G$ having maximum degree, and let $R = V(G) \setminus N_G[v]$. 

Any independent cutset of $G$ containing $v$ is contained in $R\cup\{v\}$, since $|R|=k$, we can enumerate them in $\calO(2^k)$ time. 
So, we may assume from here that every independent cutset of $G$ (if any) does not contain $v$. 

Let $S^*$ be a minimal independent cutset of $G$ that does not contain $v$.

Observe that $N_G[v] \setminus S^*$ is nonempty and belongs to one component of $G-S^*$.
Thus, some subset $\emptyset \neq R' \subseteq R \setminus S^*$ belongs to the other components.
Such a set $R'$ can be guessed in $\calO(2^k)$ time.  
Assuming that we are dealing with the correctly guessed set $R'$, the cutset $S^*$ can be seen as a minimal independent cutset separating $v$ from $R'$.
Let $I = N_G(v) \cap N_G(R')$. The set $I$ must be a subset of $S^*$; in particular, the set $I$ must be independent in $G$.

Note that $S^*$ cannot contain a vertex $w\in N_G(v)\setminus I$; otherwise, as $S^*$ is a minimal cutset, there is a path $P$ from $v$ to $R'$ such that $V(P)\cap S^*=\{w\}$ and $V(P)\cap (R\setminus R')\neq \emptyset$. This implies that a vertex of $R\setminus R'$ is in neither $S^*$ nor in the same component as $v$ after the removal of $S^*$, contradicting the fact that $R'$ is the correctly guessed set.

Thus, after guessing $R'$, we can contract $I$ into a single vertex $x_I$ and remove $N_G[v]\setminus I$. At this point, the reduced graph has size $k$, and we can enumerate its maximal independent sets in $\calO(f(k))$ time, one of them must contain $(S^*\setminus I) \cup \{x_I\}$. Let $S'$ be such a set. By Lemma~\ref{lem_maximal}, the set $(S'\setminus \{x_I\})\cup I$ is also a independent cutset of $G$ (recall that $S'\supseteq S^*$). This concludes the proof. 
\QED \end{proof}

Recall that $\calO^*(2^k \cdot f(k))$ is faster than the running time $\calO^*(3^k)$ of Theorem~\ref{thm_domindcs}, since $f(k) = \calO(3^{k/3})$ by Corollary~\ref{cor_exact}.

\medskip

We finish this subsection by considering a lower bound on the dual of the solution size as a parameter.
\version{A \emph{vertex cover} of a graph $G$ is a set of vertices $X$ such that for every $uv \in E(G)$ we have $u \in X$ or $v \in X$.}{}

\begin{theorem}
Let $G$ be the connected input graph with $n$ vertices, and let $\calO^*(f(k))$ be the running time of an algorithm enumerating all vertex covers of size at most $k$ of $G$. 
Then there is an algorithm that correctly decides in $\calO^*(f(k))$ time if $G$ has an independent cutset of size at least $n - k$.
\end{theorem}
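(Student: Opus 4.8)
The plan is to reduce the problem to enumerating vertex covers via the elementary complementary relationship between independent sets and vertex covers: a set $S \subseteq V(G)$ is independent in $G$ if and only if $X := V(G)\setminus S$ is a vertex cover of $G$, and $|S| \ge n-k$ if and only if $|X| \le k$. Hence an independent cutset of $G$ of size at least $n-k$ is precisely the complement of a vertex cover $X$ of $G$ with $|X|\le k$ for which $G - (V(G)\setminus X)$ is disconnected.

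Given this, the algorithm is immediate. First I would run the assumed enumeration procedure to list, in $\calO^*(f(k))$ time, all vertex covers $X$ of $G$ with $|X|\le k$; in particular there are at most $\calO^*(f(k))$ of them. For each such $X$ I would set $S := V(G)\setminus X$ and test in polynomial time whether $G-S$ is disconnected, returning ``yes'' (together with the witness $S$) as soon as this succeeds, and ``no'' otherwise. Correctness is a two-sided check: if the algorithm answers ``yes'' then the returned $S$ is independent (the complement of a vertex cover), is a cutset, and has size $n-|X|\ge n-k$; conversely, if $G$ has an independent cutset $S_0$ with $|S_0|\ge n-k$, then $V(G)\setminus S_0$ is a vertex cover of size at most $k$, so it occurs in the enumeration and the corresponding test succeeds. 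Since each of the $\calO^*(f(k))$ candidates is processed in polynomial time, the total running time is $\calO^*(f(k))$.

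To obtain a concrete bound I would additionally invoke Lemma~\ref{lem_maximal}: if $X$ is a vertex cover of size at most $k$ with $V(G)\setminus X$ a cutset, then any minimal vertex cover $X'\subseteq X$ satisfies $|X'|\le k$, and $V(G)\setminus X'\supseteq V(G)\setminus X$ is an independent superset of a cutset, hence itself a cutset by Lemma~\ref{lem_maximal}; thus it suffices to enumerate the minimal vertex covers of size at most $k$, of which there are at most $2^k$ and which are produced by the standard bounded-depth branching, yielding $f(k)=2^k$ (and better bounds are known). There is no genuine obstacle in this argument; the only subtle points are that a subset of a vertex cover need not be a vertex cover — so the monotonicity used here must be stated on the independent-set side and routed through Lemma~\ref{lem_maximal} — and that the ``no'' answer is justified only because the enumeration is exhaustive over all (minimal) vertex covers of size at most $k$.
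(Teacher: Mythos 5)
Your proof is correct and follows essentially the same route as the paper: complement the independent cutset into a vertex cover of size at most $k$, enumerate all such covers, and test each complement for being a cutset in polynomial time. Your additional observation that one may restrict to \emph{minimal} vertex covers by routing the monotonicity through Lemma~\ref{lem_maximal} is a valid refinement of the paper's closing remark that $f(k)=2^k$ is achievable.
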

\begin{proof}
Let $G$, $n$ and $k$ be as in the statement.
Assume that there is an independent cutset $S^*$ of $G$ such that $|S^*| \geq n - k$.
The set $V(G) \setminus S^*$ is a vertex cover of size at most $k$ of $G$.
Therefore it suffices to iterate over all vertex covers $X$ of size at most $k$ of $G$, and check whether $V(G) \setminus X$ is an independent cutset of $G$.
By assumption, this can be done in $\calO^*(f(k))$ time.
\QED \end{proof}

We remark that all vertex covers of size at most $k$ can be enumerated in $\calO^*(2^k)$ time.

\subsection{Dominating set} \label{sec_dom}

In this subsection, we prove a central theorem of our paper.
\version{A \emph{dominating set} of a graph $G$ is a set of vertices $X$ such that every $v \in V(G)$ has a neighbor in $X$ or is itself in $X$.}{}
We consider \independentcutset{} with a dominating set $X$ of $G$ as an additional input.
We show that this variant is fixed-parameter tractable when parameterized by $|X|$.
We split the proof of this fact over three lemmata.
Among other things, we distinguish the cases when $X$ is split by an independent cutset or not.
In Lemma~\ref{lem_domindcs1} we settle the case when $X$ is not split by an independent cutset.

\begin{lemma} \label{lem_domindcs1}
Let $G$ be a connected graph, and let $X$ be a dominating set of $G$. 
Assume that there is an independent cutset $S^*$ of $G$ such that $X \setminus S^*$ is contained in at most one component of $G - S^*$.
Then there is an algorithm that returns an independent cutset of $G$ in $\calO^*(2^k)$ time, where $k = |X|$.
\end{lemma}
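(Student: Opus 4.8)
The plan is to exploit the assumption that $X \setminus S^*$ lives in a single component of $G - S^*$, call it $C$. Since $X$ dominates $G$, every vertex of $G$ is either in $X$ or adjacent to $X$; in particular every vertex outside $S^*$ that is adjacent to $C$ (through a vertex of $X \cap C$) must itself lie in $C \cup S^*$. More carefully: let $D$ be the union of all components of $G - S^*$ other than $C$. Every vertex of $D$ is dominated by $X$, hence has a neighbor in $X$; but it has no neighbor in $X \cap C$ (different components of $G - S^*$), so it has a neighbor in $X \cap S^*$, or it is itself in $X$ — the latter is impossible since $X \setminus S^* \subseteq C$. Therefore $D \subseteq N_G(X \cap S^*)$, i.e. the "far side" of the cut is entirely contained in the neighborhood of $X \cap S^*$.

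This suggests the following algorithm. First guess the set $Y := X \cap S^*$; since $Y \subseteq X$ and $|X| = k$, there are at most $2^k$ choices, and we can enumerate them in $\calO^*(2^k)$ time. For a fixed guess $Y$, we know that the desired cut $S^*$ contains $Y$, is disjoint from $X \setminus Y$, and has the property that the whole far side $D$ is contained in $N_G(Y)$. So the problem reduces to: does $G$ have an independent cutset $S^*$ with $Y \subseteq S^*$, $S^* \cap (X\setminus Y) = \emptyset$, separating $N_G(Y) \setminus (S^* \cup X)$ — the only candidates for $D$ — from $X \setminus Y$? Equivalently, after checking that $Y$ is independent (else discard this guess), contract $X \setminus Y$ to a single vertex $c$ (it lies in $C$), and observe that every vertex that could end up on the far side is a neighbor of $Y$. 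Now $S^* \setminus Y$ consists only of vertices in $N_G(Y)$ that separate the far-side neighbors of $Y$ from $c$. The key structural point is that the candidate far side $D \subseteq N_G(Y)$ together with the separating part $S^* \setminus Y \subseteq N_G(Y)$ all sit inside a controlled region, but $N_G(Y)$ need not be small, so one more idea is needed.

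The cleanest way to finish, which I expect to mirror the intended proof, is a min-cut computation: after guessing $Y = X \cap S^*$ and verifying $Y$ is independent, form the auxiliary graph on vertex set $(V(G)\setminus X) \cup \{c\}$ where $c$ represents $X \setminus Y$, delete the vertices of $N_G[Y]$ that are themselves in $X$... — more simply, among vertices of $N_G(Y) \setminus N_G[X\setminus Y]$ we must choose an independent cutset $Z$ (the part $S^* \setminus Y$) so that $Y \cup Z$ disconnects the remaining neighbors of $Y$ from $c$. Since $Y \cup Z$ being independent forces $Z$ independent and $Z$ anticomplete to $Y$, and since any vertex not in $N_G[Y]$ on the far side would contradict $D \subseteq N_G(Y)$, the far side is exactly $N_G(Y)\setminus(Y\cup Z \cup N_G[X\setminus Y])$, and the condition "$Y\cup Z$ is a cutset" becomes "$Z$ separates that set from $c$ in $G - Y$". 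This is a vertex min-cut / vertex-subset separation question in $G - Y$ between a fixed source side ($c$ and $N_G[X\setminus Y]$) and a fixed target side (the rest of $N_G(Y)$), solvable in polynomial time; the independence constraint on $Z$ is handled because $Z$ is forced to lie in an independent set automatically, or else, if it is not, one argues via Lemma~\ref{lem_maximal} that enlarging to a maximal independent set still gives a cutset. Summing over the $2^k$ guesses gives the claimed $\calO^*(2^k)$ bound.

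The main obstacle is making precise that, once $Y = X\cap S^*$ is guessed, \emph{all} remaining choices collapse to a polynomial-time problem — specifically arguing that the portion $S^*\setminus Y$ can be taken to be a minimal vertex separator in $G - Y$ between two explicitly-defined, guess-determined vertex sets, and that minimality plus the domination hypothesis force this separator to be independent and anticomplete to $Y$ (so that $S^*$ really is an independent cutset of $G$). Handling the boundary case where $N_G(Y)$ meets $N_G[X\setminus Y]$ — vertices that are adjacent to both sides and hence forced into $S^*$, possibly destroying independence and letting us discard the guess — is the fiddly part; everything else is bookkeeping plus one invocation of polynomial-time min-cut.
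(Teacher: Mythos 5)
Your first step (guess $Y=X\cap S^*$ in $\calO^*(2^k)$ time) and your first structural observation (the far side $D$ of the cut satisfies $D\subseteq N_G(Y)$, by domination) match the paper. But the completion contains a genuine error and a genuine gap. The error: you assert that ``$S^*\setminus Y$ consists only of vertices in $N_G(Y)$'' and later that $Z=S^*\setminus Y$ is chosen ``among vertices of $N_G(Y)\setminus N_G[X\setminus Y]$'' while also being anticomplete to $Y$ --- these are contradictory. Since $S^*$ is independent and $Y\subseteq S^*$, no vertex of $S^*\setminus Y$ can be adjacent to $Y$; domination then forces $S^*\setminus Y\subseteq N_G(X\setminus Y)\setminus N_G[Y]$. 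The vertices of $N_G(Y)$ outside $X$ are exactly the ones \emph{forbidden} from $S^*$; they are candidates for the far side, not for the separator. The gap: your min-cut reduction posits a ``fixed target side (the rest of $N_G(Y)$)'', but only an unknown subset of $N_G(Y)$ lies on the far side (the rest may sit in the component containing $X\setminus Y$), so demanding separation of all of it from $c$ is too strong and would wrongly reject correct guesses. Moreover, even with the right source/target pair, a minimum or minimal vertex separator need not be independent, and Lemma~\ref{lem_maximal} only lets you \emph{enlarge} an independent cutset --- it cannot repair a non-independent separator. You flag this yourself as ``the main obstacle,'' and it is: it is precisely the part of the argument that is missing.

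For comparison, the paper avoids min-cuts entirely. After guessing $X'=X\cap S^*$ with $A=X\setminus X'$, it partitions $V(G)=A\cup X'\cup F\cup I$ where $F=N_G(X')\setminus A$ (forbidden from $S^*$ by independence; contains the far side) and $I=N_G(A)\setminus(X'\cup F)$ (each vertex is either in $S^*$ or in $A$'s component; contains $S^*\setminus X'$). It deletes \emph{all} of $X'\cup I$ (a superset of $S^*$), lets $B\subseteq F$ be the vertices thereby disconnected from $A$, and for each component $K$ of $G[B]$ simply tests whether $X'\cup(N_G(K)\cap I)$ is an independent cutset; the existence of $S^*$ guarantees that some $K$ (one lying in the true far side) has $N_G(K)\cap I\subseteq S^*$, hence passes the test. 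If you want to salvage your write-up, this component-boundary test is the missing mechanism that replaces your separator computation and sidesteps the independence issue.
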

\begin{proof}
Let $G$, $X$, $S^*$ and $k$ be as in the statement.
In $\calO^*(2^k)$ time, we can check whether any subset of $X$ is an independent cutset of $G$.
Therefore, we may assume from here that $S^*$ is not a subset of $X$. Since $X$ dominates $G$, the set $X$ is not a proper subset of $S^*$. In particular, $S^* \neq X$.

To find an independent cutset of $G$, we iterate over all disjoint partitions $(A, X')$ of $X$ such that $A$ is nonempty and $X'$ is independent in $G$.
In one iteration, we guess $X \cap S^*$ as $X'$.
Let $F =N_G(X') \setminus A$ and $I = N_G(A) \setminus (X' \cup F)$.
Note that $(A, X', F, I)$ is a partition of $V(G)$, because $X$ is a dominating set of $G$.
A vertex of $I$ is either in $S^*$ or in the same component as a vertex of $A$.
Let $B$ be the set of vertices that are not in a component with a vertex of $A$ in $G - (X' \cup I)$.
Note that $B \subseteq F$, and since $S^*$ is an independent cutset of $G$ and $X\setminus S^*$ is contained in at most one component of $G - S^*$, it holds that $B \neq \emptyset$.
This implies that there is a component $K$ of $G[B]$ with $N_G(K) \cap I \subseteq S^*$.
For such a component $K$, the set $X' \cup (N_G(K) \cap I)$ is an independent cutset of $G$.
Since all this can be done in $\calO^*(2^k)$ time, this finishes the proof.
\QED \end{proof}

For the case when the dominating set is split by an independent cutset, we need to consider the following situation.
Let $G'$ be a connected graph whose vertex set is the disjoint union of four sets $A$, $B$, $N_A$ and $N_B$ such that
\begin{enumerate}[(i)]
	\item $A$ and $B$ are nonempty,
	\item $A \cup B$ is independent in $G'$, and
	\item $N_{G'}(A) = N_A$ and $N_{G'}(B) = N_B$. (Note that $N_{G'}(A) \cap N_{G'}(B) = \emptyset$.)
\end{enumerate}
We show how to decide efficiently if $G'$ has an independent cutset $S' \subseteq N_A \cup N_B$ that separates $A$ and $B$ in $G$ by testing the satisfiability of a {\sc 2-SAT} formula.
Let $H$ be the bipartite graph induced by the set of edges between $N_A$ and $N_B$, and let $K_1, \dots, K_r$ be the vertex sets of the components of $H$.
Let $N_{A,i} = K_i \cap N_A$ and $N_{B,i} = K_i \cap N_B$ for all $i\in [r]$. 
We construct a {\sc 2-SAT} formula $f_{G'}$ over the Boolean variables $x_i$, $i\in [r]$, with the following clauses:
\begin{itemize}
	\item For all $i \in [r]$, if $G'[N_{A,i}]$ contains an edge, then we add the clause $(x_i)$, and if $G'[N_{B,i}]$ contains an edge, then we add the clause $(\bar{x}_i)$.
	\item For every two distinct $i,j \in [r]$, if there is an edge between $N_{A,i}$ and $N_{A,j}$ in $G'$, then we add the clause $(x_i \vee x_j)$, and if there is an edge between $N_{B,i}$ and $N_{B,j}$ in $G'$, then we add the clause $(\bar{x}_i \vee \bar{x}_j)$.
\end{itemize}
\begin{lemma} \label{lem_2sat}
In the above setting, there is an independent cutset $S' \subseteq N_A \cup N_B$ separating $A$ and $B$ in $G'$ if and only if $f_{G'}$ is satisfiable.
\end{lemma}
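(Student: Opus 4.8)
The plan is to show the two directions of the equivalence by translating between a satisfying assignment of $f_{G'}$ and a choice, for each component $K_i$ of $H$, of which ``side'' of $K_i$ goes into the cutset. The key idea is that an independent cutset $S' \subseteq N_A \cup N_B$ separating $A$ and $B$ must, within each component $K_i$ of the bipartite graph $H$ of $N_A$--$N_B$ edges, contain \emph{either} all of $N_{A,i}$ \emph{or} all of $N_{B,i}$: indeed every edge of $H$ joins a vertex of $N_A$ to a vertex of $N_B$, and since $S'$ must hit every $A$--$B$ path and in particular every edge of $H$ (such an edge is itself an $A$--$B$ path after adding the two pendant edges to $A$ and $B$, using (iii)), while $S'$ is independent, $S'$ cannot contain both endpoints of any edge of $H$; chasing this around a connected bipartite component forces $S'\cap K_i$ to be exactly one of its two sides. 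So I would set $x_i = \texttt{true}$ to mean ``$N_{A,i} \subseteq S'$'' and $x_i = \texttt{false}$ to mean ``$N_{B,i} \subseteq S'$''.

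First I would prove the forward direction. Given such an $S'$, define the assignment by the rule above (well-defined by the paragraph above). I then check each clause is satisfied. If $G'[N_{A,i}]$ has an edge, then $S'$ cannot contain all of $N_{A,i}$ (it is independent), so we are in the case $N_{B,i}\subseteq S'$, i.e.\ $x_i$ is false --- wait, that is the wrong polarity, so in fact the rule must be: $x_i$ true means $N_{B,i}\subseteq S'$. Let me instead just say: choose the encoding so that the clause $(x_i)$ is added exactly when the $N_{A,i}$-side is \emph{forbidden}, i.e.\ $x_i$ true $\iff N_{B,i}\subseteq S'$, $x_i$ false $\iff N_{A,i}\subseteq S'$. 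Then $G'[N_{A,i}]$ having an edge forces the $N_{B,i}$-side, so $x_i$ is true, satisfying $(x_i)$; symmetrically for $(\bar x_i)$. For the binary clauses: if there is an edge between $N_{A,i}$ and $N_{A,j}$, then $S'$ independent forbids $N_{A,i}\subseteq S'$ and $N_{B,j}\subseteq S'$ from holding simultaneously --- hmm, that is not quite a clause either. Actually the cleaner statement: an $N_A$--$N_A$ edge between $K_i$ and $K_j$ means we cannot have both $N_{A,i}\subseteq S'$ \emph{and} $N_{A,j}\subseteq S'$ containing those endpoints; but being careful, the endpoints are specific vertices. Since $S'\cap K_i$ is \emph{all} of one side, an $N_{A,i}$ vertex lies in $S'$ iff the whole side $N_{A,i}$ does. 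So the edge forbids ``$N_{A,i}\subseteq S'$ and $N_{A,j}\subseteq S'$'' simultaneously, i.e.\ $\lnot x_i' \land \lnot x_j'$ is forbidden where $x'$ is the ``$N_{A}$-side chosen'' indicator; that is the clause $(x_i \lor x_j)$ under the encoding $x_i = $ ``$N_A$-side NOT chosen''. I would fix the encoding at the start of the proof to make all of this line up, state it once, and then the clause verification is a routine case check. The last thing in this direction is to confirm $S'$ is genuinely recovered: $S'$ contains no vertex outside $N_A\cup N_B$ by hypothesis, and outside $\bigcup K_i$ the sets $N_A, N_B$ have no $H$-edges, so such isolated vertices of $N_A$ (resp.\ $N_B$) with no $N_B$- (resp.\ $N_A$-) neighbor form their own singleton components $K_i$ and are handled the same way; I should remark that every vertex of $N_A\cup N_B$ lies in some $K_i$.

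For the converse, given a satisfying assignment, set $S' = \bigcup_{i:\,x_i\text{ says ``}N_{B,i}\text{''}} N_{B,i} \;\cup\; \bigcup_{i:\,x_i\text{ says ``}N_{A,i}\text{''}} N_{A,i}$, i.e.\ pick one side of each component according to the assignment. I must verify three things: (a) $S'$ is independent: within a chosen side $N_{A,i}$ there is no edge because the clause $(\bar x_i)$-type guard (more precisely the guard that forbids choosing $N_{A,i}$ when $G'[N_{A,i}]$ has an edge) is satisfied and we chose that side; between two chosen $N_A$-sides $N_{A,i}, N_{A,j}$ there is no edge by the binary clause; between a chosen $N_A$-side and a chosen $N_B$-side there is no $H$-edge by construction ($H$-edges inside $K_i$ go between $N_{A,i}$ and $N_{B,i}$, but we only took one side of $K_i$, and there are no $H$-edges between distinct components); and there are no other edges among $N_A\cup N_B$. (b) $S'$ separates $A$ and $B$: any $A$--$B$ path in $G'$ must pass from $A$ into $N_A$, eventually cross into $N_B$, and enter $B$; the first $N_A$-to-$N_B$ crossing is an $H$-edge inside some $K_i$, and $S'$ contains one full side of $K_i$, so the path meets $S'$. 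I should phrase this as: delete $S'$; then in $G'-S'$, every vertex of $N_A$ either is gone or lies in a component $K_i$ whose $N_B$ side was removed, hence is not adjacent to anything in $N_B$; so no $A$-vertex reaches a $B$-vertex. (c) $S'$ is a cutset (disconnects $G'$): this is immediate from (b) together with $A\neq\emptyset\neq B$, since after removing $S'$ the sets $A$ and $B$ are nonempty and lie in different components.

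The main obstacle I expect is the bookkeeping of polarities and the structural claim that a separating independent cutset must take exactly one full side of each bipartite component $K_i$ --- this ``all or nothing per component'' fact is really the heart of the lemma, and it relies on (i)--(iii) ensuring that every edge of $H$ extends to a genuine $A$--$B$ path (so it must be hit) combined with independence (so it cannot be hit on both ends), propagated along connected bipartite components. Once that claim is stated cleanly and the encoding is pinned down at the outset, the rest is a mechanical clause-by-clause verification in each direction, which I would compress rather than write out in full.
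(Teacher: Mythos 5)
Your proposal is correct and follows essentially the same route as the paper: the same ``all or nothing per component of $H$'' structural claim (which you justify by hitting each $H$-edge exactly once and propagating the alternation through the connected bipartite component, whereas the paper uses a shortest-path parity argument --- the same idea in substance), the same variable encoding ($x_i$ true iff the $N_{B,i}$-side is taken), and the same clause-by-clause verification in both directions. The only blemish is the visible polarity fumbling mid-proof, which you do resolve to the correct convention.
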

\begin{proof}
Let $G'$, $f_{G'}$ and all corresponding sets be as in the statement.

For one direction, assume that there is an independent cutset $S' \subseteq N_A \cup N_B$ separating $A$ and $B$ in $G'$. 
We claim that either $K_i \cap S' = N_{A,i}$ or $K_i \cap S' = N_{B,i}$ is true for every $i \in [r]$.
Assume, for a contradiction, that the opposite is true for some $i \in [r]$. 
Let $N_{A,i}' = N_{A,i} \cap S'$ and $N_{B,i}' = N_{B,i} \cap S'$.
Then both sets are nonempty. Recall that $N_G(A) = N_A$, $N_G(B) = N_B$, $H[K_i]$ is connected and $S'$ is a cutset separating $A$ and $B$ in $G'$. 
Consider $d = \dist_H(N_{A,i}', N_{B,i}')$.
Since $H[K_i]$ is connected, $d < \infty$, since $H$ is bipartite, $d$ is odd, and since $S'$ is independent, $d \geq 2$.
Altogether, we have $d \geq 3$.
Since every vertex of $N_{A,i}'$ has a neighbor in $A$, and every vertex of $N_{B,i}'$ has a neighbor in $B$, the set $S'$ does not separate $A$ and $B$, a contradiction.
Using the claim, the assignment
\begin{align*}
	\alpha(x_i) = \begin{cases}
	\texttt{false}, &\text{if $K_i \cap S' = N_{A,i}$} \\
	\texttt{true}, &\text{if $K_i \cap S' = N_{B,i}$}
	\end{cases} \text{ for } i \in [r]
\end{align*}
is well-defined, and it satisfies $f_{G'}$.

For the opposite direction, let $\alpha$ be a satisfying assignment for $f_{G'}$. Then
\begin{align*}
S' = \left( \bigcup_{i \in [r]: \alpha(x_i)=\texttt{false}} N_{A,i} \right) \cup \left( \bigcup_{i \in [r]: \alpha(x_i)=\texttt{true}} N_{B,i} \right)
\end{align*}
is a subset of $N_A \cup N_B$, and it is easy to verify that it is an independent cutset of $G'$. This completes the proof. 
\QED \end{proof}

\begin{lemma} \label{lem_domindcs2}
Let $G$ be a connected graph, and let $X$ be a dominating set of $G$. 
If $G$ has an independent cutset $S^*$ such that the vertices of $X \setminus S^*$ are in at least two different components of $G - S^*$, then an independent cutset of $G$ can be found in $\calO^*(3^k)$ time, where $k = |X|$.
\end{lemma}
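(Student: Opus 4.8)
The plan is to enumerate, in $\calO^*(3^k)$ time, all ways in which the dominating set can sit relative to the cut, and to resolve each one in polynomial time by means of Lemma~\ref{lem_2sat}. Concretely, I would first run over all ordered triples $(A,B,X')$ partitioning $X$ with $A,B\neq\emptyset$ and $X'$ independent in $G$; there are at most $\sum_{j}\binom{k}{j}2^{k-j}=3^k$ of them. For the hypothesised cutset $S^*$, the ``correct'' triple is $X'=X\cap S^*$ together with $A=(X\setminus S^*)\cap C$ and $B=(X\setminus S^*)\setminus A$, where $C$ is any component of $G-S^*$ meeting $X\setminus S^*$: then $A,B$ are nonempty, $X'$ is independent, and $S^*$ is an independent cutset separating $A$ and $B$. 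So it suffices to decide, for each triple, whether $G$ has an independent cutset separating $A$ and $B$ whose intersection with $X$ is exactly $X'$, and to output one if it exists; under the lemma's hypothesis the correct triple always succeeds, so if no triple succeeds we need not worry (that situation is not covered by the hypothesis).

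For a fixed triple the crucial use of the domination property is as follows. Any cutset $S$ as above must contain $X'$ and, being independent, must avoid $N_G[X']\cup A\cup B$; moreover every vertex outside $X$ is adjacent to $A$, to $B$, or to $X'$. Hence, in the natural three-way labelling of $V(G)$ by ``$A$-side'', ``$B$-side'' and ``in the cut'', every vertex has at most two feasible labels: $A$ and $B$ are fixed to their own sides, $X'$ is fixed to the cut, a vertex adjacent to both $A$ and $B$ is fixed to the cut, a vertex adjacent to $X'$ and to exactly one of $A,B$ is fixed to that side, and the remaining vertices have exactly two options. I would then propagate the forced labels: identify all vertices fixed to the $A$-side into a single vertex $a$, those fixed to the $B$-side into a single vertex $b$, those fixed to the cut into one vertex that is deleted while its neighbours become barred from the cut, and iterate. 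This either uncovers an inconsistency (two vertices forced into the cut are adjacent, or $a$ is adjacent to $b$), so that the triple is discarded, or ends with a graph whose terminals are $a,b$, whose unbarred vertices are neighbours of exactly one of $a,b$, and whose only remaining barred vertices form components (``blobs'') touching no terminal; since such a blob is necessarily monochromatic, contracting each blob and recording its side as one extra Boolean variable brings the instance into the format of Lemma~\ref{lem_2sat} (terminals, disjoint neighbourhoods, cut confined to those neighbourhoods). A single $2$-SAT test then decides feasibility, and undoing the identifications and adding back $X'$ and the deleted forced-cut vertex yields the required independent cutset of $G$.

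Each triple is handled in polynomial time, giving total running time $\calO^*(3^k)$, and the correctness of the identifications (identifying vertices forced to the same side does not create or destroy the cutsets we care about, because for the correct triple all of $A$ and all vertices forced to its side lie in the single component $C$ of $G-S^*$) is routine. The hard part will be exactly the ``blobs'': vertices that cannot be in the cut yet are not forced onto a particular side. One has to argue that the cut partitions each blob together with whatever it hangs from, so that a blob behaves like a single super-vertex carrying one binary choice, and that weaving this choice into the $2$-SAT formula of Lemma~\ref{lem_2sat} (through the clauses linking a blob to its unbarred neighbours) is sound — in particular that the witness $S^*$ for the correct triple survives all the identifications and corresponds to a satisfying assignment. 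The remaining ingredients, namely the $3^k$ count and the verification that the ``correct'' triple is feasible, are immediate.
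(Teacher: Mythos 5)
Your proposal is correct and follows essentially the same route as the paper's proof: enumerate the $3^k$ partitions $(A,B,X')$ of the dominating set, use domination to force every vertex outside $X$ either into the cut (neighbors of both $A$ and $B$, the paper's set $N$) or out of it (neighbors of $X'\cup N$, the paper's set $F$), propagate these constraints along the $N_A$--$N_B$ edges, and resolve the remaining binary choices with the 2-SAT formula of Lemma~\ref{lem_2sat}. Your ``blobs'' are exactly the components of the paper's set $F' = F\setminus(N_A\cup N_B)$, and your one-variable-per-blob encoding is equivalent to the paper's device of inserting all edges between $N_G(K)\cap N_A$ and $N_G(K)\cap N_B$, which merges those vertices into a single component of $H$ governed by one 2-SAT variable.
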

\begin{proof}
Let $G$, $X$ and $k$ be as in the statement.
If $G$ has an independent cutset $S^*$ such that the vertices of $X \setminus S^*$ are in at least two different components of $G - S^*$, then there is a partition $(A^*,B^*,X^*)$ of $X$ with the following properities.
We have $X^*=X\cap S^*$, the set $A$ is a maximal subset of vertices of $X$ contained in one component of $G - S^*$, and $B=X\setminus (X^*\cup A)$.

In $\calO^*(3^k)$ time one can enumerate all partitions of $X$ into three sets $A$, $B$ and $X'$ such that $A$ and $B$ are nonempty, there is no edge between $A$ and $B$, and $X'$ is an independent set of $G$. In other words, $X'$ is an independent cutset of $G[X]$ separating the nonempty sets $A$ and $B$.

For each enumerated partition $(A,B,X')$, let 
\begin{itemize}
    \item $N=N_{G}(A) \cap N_{G}(B)$,
    \item ${N_A=N_{G}(A)\setminus N}$,
    \item $N_B = N_{G}(B) \setminus N$, and
    \item let $H$ be the bipartite subgraph of $G$ induced by the edges between $N_A$ and $N_B$.
\end{itemize}

For the partition $(A,B,X')=(A^*,B^*,X^*)$ it holds that $(X' \cup N)\subseteq S^*$.
Thus, $I=X' \cup N$ must be an independent set of $G$ and $N_G(X' \cup N) \cap S^* = \emptyset$. 
Initialize $F=N_G(X' \cup N)$. 
($F$ is the set of vertices forbidden to be in $S^*$.) 
Clearly, $H$ has no edge with both endpoints in $F$; otherwise, we are not dealing with the partition $(A,B,X')=(A^*,B^*,X^*)$. 
Also, 
for any edge $uv \in E(H)$ such that $u\in F$ it holds that $v\in S^*$; otherwise $S^*$ does not separate $A$ and $B$. 
So, one can add $v$ into $I$. 
Note that $I$ is a subset of the vertices that must be in $S^*$. So, reversely, any edge of $H$ with one endpoint in $I$ must have the other endpoint forbidden to be in $S^*$, thus we can add it to $F$. This describes the rule to construct a bipartition $(I_H,F_H)$ of the vertices of $H$ that are connected to some vertex of $N_G(X' \cup N)$ in $H$. In particular, $I_H\subseteq I$, $F_H\subseteq F$, and $I$ must be an independent set of $G$.

Let $F' = F \setminus (N_A \cup N_B)$. Note that the vertices of $F'$ have neighbors in $X'$ but no neighbor in $A \cup B$, also, there may exist paths of $G$ from $A$ to $B$ passing through $F'$. For any path from $A$ to $B$ passing through $F'$ having only one vertex of $N_A$, say $a$, and only one vertex of $N_B$, say $b$, it holds that either $a$ or $b$ must be in $S^*$. The same holds with edges $ab \in E(G)$ such that $a\in N_A$ and $b\in N_B$. Also, if there is a path of $G$ from $A$ to $B$ passing through $F'$, then there is a path from $A$ to $B$ passing through $F'$ having only one vertex of $N_A$ and only one vertex of $N_B$.

At this point, let $G'$ be the graph obtained from $G$ by
\begin{itemize}
    \item contracting all components of $A$ and $B$ into a single vertex,
    \item inserting all possible edges between $N_G(K) \cap N_A$ and $N_G(K) \cap N_B$, where $K$ is a component of $F'$, and
    \item deleting all vertices of $I \cup F$.
\end{itemize}

Now, according to Lemma~\ref{lem_2sat}, we can use a {\sc 2-SAT} formula to decide in polynomial time (cf.~\cite{aspvall_et_al_1979}) which vertices of $N_A$ and $N_B$ should be in $S^*$. These vertices, together with the vertices previously fixed in $I$, form an independent cutset of $G$ separating $A$ and $B$. 

Therefore, by checking all partitions $(A,B,X')$ of $X$, we can in $\calO^*(3^k)$ time either find the required cutset, or conclude that $G$ does not admit such a cutset.
This completes the proof.
\QED \end{proof}

\version{The entire procedure described in Lemma~\ref{lem_domindcs2} is presented as Algorithm~\ref{alg_domindcs2} in Appendix~\ref{app_domindcs2}.}{}

\medskip

Now we are in a position to formulate Theorem~\ref{thm_domindcs}, which is a direct consequence of Lemmata~\ref{lem_domindcs1} and~\ref{lem_domindcs2}.

\begin{theorem} \label{thm_domindcs}
Let $G$ be the connected input graph, and let $X$ be a dominating set of $G$. 
Then \independentcutset{} with $X$ as an additional input can be solved in time $\calO^*(3^k)$, where $k = |X|$.
\end{theorem}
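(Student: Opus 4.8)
The plan is to derive Theorem~\ref{thm_domindcs} purely as a combination of the two preceding lemmata, with essentially no new work beyond a case split. Given the connected input graph $G$ and the dominating set $X$ with $k = |X|$, first I would dispose of the trivial case: in $\calO^*(2^k)$ time, enumerate every subset of $X$ and test whether it is an independent cutset of $G$ (testing a fixed set takes polynomial time). If one of these subsets works, we are done; otherwise we may assume that no independent cutset of $G$ is a subset of $X$, and in particular any independent cutset $S^*$ of $G$ (if one exists) satisfies $S^* \neq X$ and is not contained in $X$.

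Next I would observe that if $G$ has any independent cutset $S^*$ at all, then exactly one of the two hypotheses of Lemmata~\ref{lem_domindcs1} and~\ref{lem_domindcs2} holds for it: either $X \setminus S^*$ lies in at most one component of $G - S^*$, or the vertices of $X \setminus S^*$ are spread over at least two different components of $G - S^*$. These are complementary conditions, so one of the two lemmata applies. The algorithm for Theorem~\ref{thm_domindcs} therefore simply runs both: it invokes the algorithm of Lemma~\ref{lem_domindcs1} (which runs in $\calO^*(2^k)$ time) and the algorithm of Lemma~\ref{lem_domindcs2} (which runs in $\calO^*(3^k)$ time), and returns an independent cutset if either of them produces one. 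If $G$ possesses an independent cutset, then by the dichotomy above at least one of the two subroutines is guaranteed to find one (possibly a different one than $S^*$, but that is fine since we only need to output \emph{some} independent cutset); if $G$ has none, both subroutines fail and we correctly report ``no''.

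Finally I would tally the running time: the dominating set $X$ is given as part of the input, so no time is spent computing it; the initial brute-force over subsets of $X$ costs $\calO^*(2^k)$; the call to the algorithm of Lemma~\ref{lem_domindcs1} costs $\calO^*(2^k)$; and the call to the algorithm of Lemma~\ref{lem_domindcs2} costs $\calO^*(3^k)$. The dominant term is $\calO^*(3^k)$, giving the claimed bound. I do not expect any genuine obstacle here: all the technical content — the subset enumeration, the $2$-SAT reduction, the partition guessing, the contraction arguments — has already been handled inside Lemmata~\ref{lem_domindcs1}, \ref{lem_2sat} and~\ref{lem_domindcs2}. The only thing to be careful about is making the case dichotomy airtight, i.e.\ checking that ``$X \setminus S^*$ meets at most one component'' and ``$X \setminus S^*$ meets at least two components'' really do exhaust all possibilities for an arbitrary independent cutset $S^*$, which is immediate, and noting that we never need to know in advance which case we are in — we just run both algorithms unconditionally.
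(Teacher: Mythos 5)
Your proposal is correct and matches the paper's own treatment: the paper states Theorem~\ref{thm_domindcs} as a direct consequence of Lemmata~\ref{lem_domindcs1} and~\ref{lem_domindcs2}, using exactly the dichotomy you describe (either $X \setminus S^*$ meets at most one component of $G - S^*$ or it meets at least two), and running both subroutines unconditionally. The only minor redundancy is your separate $\calO^*(2^k)$ brute-force over subsets of $X$, which is already performed inside the algorithm of Lemma~\ref{lem_domindcs1}.
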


As a corollary of Theorem~\ref{thm_domindcs}, we obtain an \FPT-algorithm for \independentcutset{} parameterized by the independence number that has single-exponential dependence on the parameter. The fixed-parameter tractability of \independentcutset{} parameterized by the independence number also follows from the result of Marx, O'Sullivan and Razgon~\cite{MaOsRa}; however, their resulting dependence on the parameter is larger, and they only consider the $s$-$t$ version of the problem. 

\begin{corollary} \label{cor_ind}
\independentcutset{} can be solved in $\calO^*(3^k)$ time, where $k$ is the independence number of the input graph.
\end{corollary}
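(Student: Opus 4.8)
The plan is to derive this directly from Theorem~\ref{thm_domindcs} by exhibiting a dominating set whose size is bounded by the independence number, and which can be found efficiently. The crucial observation is that although a \emph{maximum} independent set cannot be computed in polynomial time, we do not need one: any \emph{inclusion-maximal} independent set already serves two purposes simultaneously. First, by maximality it is a dominating set of $G$, since any vertex with no neighbour in the set could otherwise be added. Second, its cardinality is at most the independence number $\alpha(G) = k$, simply because it is an independent set.

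Concretely, I would first greedily compute a maximal independent set $X$ of $G$ in polynomial time (repeatedly pick any vertex not yet dominated and add it to $X$). Then I would argue, in one sentence each, that $X$ is a dominating set of $G$ and that $|X| \le k$. Finally I would invoke Theorem~\ref{thm_domindcs} on $G$ with $X$ supplied as the additional dominating-set input; that theorem solves \independentcutset{} in time $\calO^*(3^{|X|})$, and since $|X| \le k$ this is $\calO^*(3^{k})$, as claimed. (The same reasoning shows that, in the affirmative case, an independent cutset is returned, by the remark following the problem definition.)

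I do not expect any genuine obstacle here; the proof is short once the ``maximal, not maximum'' point is made. The only thing worth stating carefully is precisely this point, so that the reader does not worry about the \NP-hardness of computing $\alpha(G)$: we never need the value of $k$ nor an optimal independent set, only \emph{some} dominating set of size $\le k$, and maximal independent sets provide exactly that in polynomial time. It may also be worth noting explicitly that the bound $\calO^*(3^{k})$ is monotone in the parameter, so plugging in $|X|\le k$ is legitimate.
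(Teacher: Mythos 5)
Your proposal is correct and matches the paper's argument exactly: the paper also observes that any maximal independent set is a dominating set of size at most $\alpha(G)$ and then invokes Theorem~\ref{thm_domindcs}. Your additional remarks (maximal rather than maximum, monotonicity of the bound) are just a more careful spelling-out of the same one-line proof.
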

\begin{proof}
This follows from Theorem~\ref{thm_domindcs}, because any maximal independent set is a dominating set. 
\QED \end{proof}

\subsection{Distance to bipartite graphs} \label{sec_bipartite}
Let $G$ be a graph.
An \emph{odd cycle transversal} of $G$ is a set of vertices $X$ such that $G-X$ is bipartite.
The minimum cardinality of such a set can be seen as a distance measure of how far away a graph is from being bipartite.
Given $G$ and a nonnegative integer $k$, it is \NP{}-hard to decide whether $G$ admits an odd cycle transversal of size at most $k$~\cite{garey_johnson_1990}.
Nevertheless, Reed, Smith and Vetta~\cite{reed_smith_vetta_2004} proved that it is fixed-parameter tractable with respect to the solution size.
Lokshtanov, Narayanaswamy, Raman, Ramanujan and Saket~\cite{lokshtanov_et_al_2014} showed that it can be decided in time $\calO^*(2.3146^k)$ if $G$ admits an odd cycle transversal of size at most $k$.
If the answer is affirmative, an odd cycle transversal of size at most $k$ can be determined as a byproduct.
With this, we are able to formulate Corollary~\ref{cor_bip}.

\begin{corollary} \label{cor_bip}
\independentcutset{} can be solved in $\calO^*(3^k)$ time, where $k$ is the odd cycle transversal number of $G$.
\end{corollary}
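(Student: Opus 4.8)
The plan is to reduce \independentcutset{} on $G$ to the dominating-set-augmented variant handled by Theorem~\ref{thm_domindcs}, by exhibiting a dominating set whose size is bounded by a function of the odd cycle transversal number $k$. Let $X$ be an odd cycle transversal of $G$ of size $k$, which we may compute in $\calO^*(2.3146^k)$ time by the result of Lokshtanov et al.; then $G - X$ is bipartite with bipartition $(B_1, B_2)$. The key observation is that each of $B_1$ and $B_2$ is an independent set, hence (being a large independent set in the bipartite part, together with all of $X$) one of them — say $B_1$ — dominates all of $B_2$ \emph{if} every vertex of $B_2$ has a neighbor in $B_1$; this is not automatic, but the set $X \cup B_1$ is always a dominating set, and so is $X \cup B_2$. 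Unfortunately $|X \cup B_1|$ need not be bounded by $f(k)$.

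So the first real step is to get a small dominating set out of the bipartite structure. Here I would argue as follows: if $G$ has an independent cutset at all, we can afford to branch. Consider $N_G[X]$. If $V(G) = N_G[X]$, then $X$ itself is a dominating set of size $k$ and we invoke Theorem~\ref{thm_domindcs} directly for $\calO^*(3^k)$ time. Otherwise $R := V(G) \setminus N_G[X]$ is nonempty; every vertex of $R$ lies in $G - X$, so $R \subseteq B_1 \cup B_2$, and moreover a vertex in $R \cap B_1$ has \emph{all} its neighbors in $B_1$ (since it has none in $X$, and none in $B_2$ by independence of $B_1 \cup$ nothing — wait, $B_1$ is independent but a $B_1$-vertex can have $B_2$-neighbors). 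Let me restart this step: the point is that $G[R]$ is a bipartite graph, and actually $G[B_1 \cup B_2] = G - X$ is bipartite and connected-components-wise well understood; the cleanest route is that $B_1$ is a dominating set of $G - X$ restricted appropriately, but the honest bound is just that \textbf{one of the two color classes together with $X$} gives a dominating set, and we should instead bound things by showing an independent cutset, if it exists, can be taken to essentially live inside $N_G[X]$ plus a bounded remainder.

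Given the structure of the paper, I suspect the intended argument is shorter and goes through the \emph{independence number}: if $G$ is bipartite-plus-$k$-vertices and $G$ has \emph{no} independent cutset of the trivial kind, then $G$ is "dense" in a sense forcing the independence number to be small — but that is false for bipartite graphs. The robust plan, and the one I would commit to, is: compute an odd cycle transversal $X$ with $|X| = k$; note $X \cup B_1$ and $X \cup B_2$ are both dominating sets; observe that $\min(|B_1|, |B_2|)$ could be large, so instead use that $G - X$ bipartite implies its vertex set is covered by a vertex cover of one side — concretely, in each connected component $C$ of $G - X$, the smaller color class of $C$ together with $X$ dominates $C$, and summing, $X$ together with $\bigcup_C (\text{smaller side of } C)$ is a dominating set of size at most $k + n/2$, which is still not $f(k)$. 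The main obstacle, then, is precisely that \emph{distance to bipartite does not bound domination number}, so Corollary~\ref{cor_bip} cannot follow from Theorem~\ref{thm_domindcs} by a naive dominating-set bound; the real proof must instead directly exploit bipartiteness of $G - X$ — e.g. after guessing $X \cap S^*$ (at most $2^k$ choices) and the side to which each component of $X$-vertices is glued, one is left with finding an independent cutset that separates a prescribed bipartition of the "boundary", which on the bipartite remainder is a {\sc 2-SAT}-type / matching-type condition solvable in polynomial time, in the spirit of Lemma~\ref{lem_2sat}. I would therefore structure the proof as: (1) compute $X$; (2) enumerate $X' = X \cap S^*$ over independent subsets of $X$ and the induced partition of $X \setminus X'$ into "sides" $A, B$ in $\calO^*(3^k)$ ways; (3) for each guess, set up the forbidden/forced vertices exactly as in Lemma~\ref{lem_domindcs2}, contract components, and reduce to a {\sc 2-SAT} instance via Lemma~\ref{lem_2sat}, using that the relevant auxiliary graph is bipartite since $G - X$ is; (4) conclude total time $\calO^*(3^k)$. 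The hard part is verifying that the {\sc 2-SAT} reduction of Lemma~\ref{lem_2sat} applies verbatim once $X$'s interaction has been guessed — i.e. that on $G - X$ the "between-$N_A$-and-$N_B$" graph is genuinely bipartite so the distance-parity argument in Lemma~\ref{lem_2sat} survives.
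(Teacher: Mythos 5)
There is a genuine gap here, and it is exactly the one you circle around without closing. You correctly observe that the odd cycle transversal number does not bound the domination number in general, and from this you conclude that the corollary ``cannot follow from Theorem~\ref{thm_domindcs} by a naive dominating-set bound.'' But it does follow from Theorem~\ref{thm_domindcs}, via one preprocessing observation that your proposal never finds: if some vertex $v$ of $G$ lies in no triangle, then $N_G(v)$ is an independent set and (modulo trivial small cases) an independent cutset, so we are done in polynomial time; otherwise \emph{every} vertex of $G$ lies in a triangle, and since $G-X$ is bipartite and hence triangle-free, every triangle contains a vertex of $X$ --- so every vertex is in $X$ or adjacent to a vertex of $X$, i.e.\ $X$ itself is a dominating set of size at most $k$, and Theorem~\ref{thm_domindcs} gives the $\calO^*(3^k)$ bound directly. (The same argument shows a triangle-hitting set works in place of an odd cycle transversal.) Your proposal is missing precisely this case distinction, which is the entire content of the proof.

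The alternative route you commit to at the end --- guess $X\cap S^*$ and a partition of $X\setminus S^*$ into sides, then reduce the bipartite remainder to {\sc 2-SAT} in the spirit of Lemma~\ref{lem_2sat} --- is not a proof as written, and the obstacle is the one you yourself flag as ``the hard part'' and leave unresolved. Lemma~\ref{lem_2sat} and the machinery of Lemma~\ref{lem_domindcs2} crucially use that $X$ dominates $G$: after guessing the partition $(A,B,X')$ of $X$, every remaining vertex lies in $N_G(A)\cup N_G(B)\cup N_G(X')$, which is what confines the sought cutset to $N_A\cup N_B$ together with the forced set $I$ and makes the ``each component $K_i$ contributes either $N_{A,i}$ or $N_{B,i}$'' claim go through. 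Without domination, vertices of $G-X$ at distance two or more from $X$ are unconstrained by the guess, the independent cutset need not live in the boundary sets at all, and the parity/distance argument of Lemma~\ref{lem_2sat} has nothing to latch onto. So the reduction does not apply ``verbatim,'' and you would need a genuinely new argument for the undominated part --- an argument that the triangle observation renders unnecessary.
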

\begin{proof}
If a vertex is not part of a triangle in $G$, then we return its neighborhood as an independent cutset of $G$.
Therefore we may assume every vertex is part of a triangle in $G$.
We compute an odd cycle transversal $X$ of size at most $k$ of $G$.
As stated, this is fixed-parameter tractable with respect to $k$ as a parameter, and can be done faster than $\calO^*(3^k)$ time.
Since every vertex is part of a triangle in $G$, the set $X$ is a dominating set of $G$.
Therefore the statement follows from Theorem~\ref{thm_domindcs}.
\QED \end{proof}

A \emph{triangle-hitting set} of $G$ is a set of vertices whose removal makes the resulting graph triangle-free. It is easy to see that the previous proof works even if $X$ is a minimum triangle-hitting set instead of a minimum odd cycle transversal. Since a triangle-hitting set of $G$ having size at most $k$ can be found in $\calO^*(3^k)$ time, the following corollary also holds.

\begin{corollary} \label{cor_triangle}
\independentcutset{} can be solved in time $\calO^*(3^k)$, where $k$ is the size of a minimum triangle-hitting set of $G$.
\end{corollary}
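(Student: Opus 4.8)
The plan is to mimic the proof of Corollary~\ref{cor_bip} almost verbatim, replacing the odd cycle transversal by a triangle-hitting set. First I would dispose of the easy case: if some vertex $v$ of $G$ is not contained in any triangle, then $N_G(v)$ is an independent set, and since $G$ is connected it is also a cutset, so we simply return $N_G(v)$. Hence from now on we may assume that every vertex of $G$ lies in a triangle. Next I would compute a triangle-hitting set $X$ of $G$ of size at most $k$; since deciding whether such a set of size at most $k$ exists (and producing one) can be done in $\calO^*(3^k)$ time by a standard branching argument (whenever a triangle remains, branch on which of its three vertices to delete), this step fits within the claimed running time.

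The key observation is the same as in Corollary~\ref{cor_bip}: because every vertex of $G$ is in a triangle, and $X$ hits every triangle, every vertex of $G$ is either in $X$ or has a neighbor in $X$ — so $X$ is a dominating set of $G$. With a dominating set of size at most $k$ in hand, I would invoke Theorem~\ref{thm_domindcs}, which solves \independentcutset{} with $X$ as additional input in $\calO^*(3^k)$ time. Combining the $\calO^*(3^k)$ cost of finding $X$ with the $\calO^*(3^k)$ cost of Theorem~\ref{thm_domindcs} gives the desired overall bound.

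There is essentially no obstacle here — the only point that deserves a sentence of justification is why hitting all triangles plus "every vertex is in a triangle" forces $X$ to dominate: if $v \notin X$, pick any triangle $T$ containing $v$; since $X$ hits $T$, some vertex of $T$ (necessarily a neighbor of $v$) lies in $X$. The rest is a black-box appeal to Theorem~\ref{thm_domindcs} exactly as in the proof of Corollary~\ref{cor_bip}, and the only difference from that proof is citing the $\calO^*(3^k)$-time algorithm for triangle-hitting set rather than the faster odd-cycle-transversal algorithm, which is why the running time here is stated as $\calO^*(3^k)$ with no room to spare.
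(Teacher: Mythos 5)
Your proposal is correct and matches the paper's argument exactly: the paper likewise notes that the proof of Corollary~\ref{cor_bip} goes through verbatim with a triangle-hitting set in place of an odd cycle transversal, since every vertex lying in a triangle forces such a set to dominate, and then cites the $\calO^*(3^k)$-time branching algorithm for finding the hitting set. Nothing is missing.
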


\subsection{Distance to chordal graphs} \label{sec_chordal}

We begin by proving a more general theorem. 

\version{}{}

\begin{theorem} \label{thm_resind}
Let $G$ be the connected input graph, and let $\hat\calT = (T, \{(X_t, U_t):t \in V(T)\})$ be a nice $\ell$-refined tree decomposition of $G$ with residual independence number at most $k$.
\independentcutset{} with $\hat\calT$ as an additional input can be decided in $\calO^*(3^\ell (2n)^k)$ time.
\end{theorem}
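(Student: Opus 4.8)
The plan is to run a dynamic programming scheme over the nice $\ell$-refined tree decomposition $\hat\calT$, in the spirit of the standard Courcelle-style DP for connectivity problems but with two twists tailored to the refinement. The first twist: when processing a bag $X_t$, I will separate the bag vertices into the "small part" $U_t$ (of size at most $\ell$) and the "residual part" $X_t \setminus U_t$. Over $U_t$ I will brute-force all of the partial solution information in the usual way (which side of the cutset each vertex is on, and how the components on the two sides of the cutset group together), contributing the $3^\ell$-type factor (three states: in the independent cutset, on the "$A$-side", or on the "$B$-side"), refined further by a partition of the $U_t$-vertices into connectivity classes within each side — still a function of $\ell$ only. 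The second twist is how to handle $X_t \setminus U_t$: because $\alpha(G[X_t\setminus U_t]) \le k$, any independent set of $G$ meets $X_t\setminus U_t$ in at most $k$ vertices, so the part of the candidate independent cutset $S$ lying in $X_t\setminus U_t$ has size at most $k$. Hence instead of tracking a status for every vertex of $X_t\setminus U_t$, it suffices to record the subset $S\cap(X_t\setminus U_t)$, and there are at most $\sum_{i\le k}\binom{|X_t\setminus U_t|}{i} = \calO(n^k)$ such subsets; all remaining vertices of $X_t\setminus U_t$ are automatically on the non-cutset side, and I additionally record, for each such vertex, which connectivity class (of the at most $|U_t|+k+1$ classes being tracked) it belongs to — a quantity bounded by a function of $\ell,k$ times $n^k$, which is absorbed into the stated $\calO^*(3^\ell(2n)^k)$ bound (the $2$ in $(2n)^k$ giving slack for the summation over subset sizes and the constant number of class-labelings, after the usual care).

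With the state space fixed, the DP itself is routine: a table entry for node $t$ records, for the graph $G_t$ induced by the vertices introduced in the subtree rooted at $t$, whether there is an independent set $S$ of $G_t$ consistent with the recorded $U_t$-statuses and the recorded $S\cap(X_t\setminus U_t)$, such that (a) $S$ is independent in $G_t$, (b) the recorded partition correctly describes the components of $G_t - S$ restricted to the bag, and (c) we have already certified that $G_t-S$ contains at least one component entirely forgotten so far that lies on the opposite side from some other forgotten vertex — i.e. a "split has already been witnessed" flag. Introduce, forget, and join nodes are handled by the textbook transitions for cut/connectivity DPs (at a forget node we check that the forgotten vertex is not left dangling and update the "split witnessed" flag; at a join node we merge the two partitions via a union-find-style operation and OR the flags), with the only modification being the bookkeeping for the residual vertices, whose statuses are almost all fixed. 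I would appeal to Lemma~\ref{lem_nice} so that the decomposition has $\calO(\ell n)$ nodes, and to Lemma~\ref{lem_maximal} only implicitly (the DP searches for any independent cutset, not a maximal one). At the root, $X_r=\emptyset$, so we simply check whether any table entry has the "split witnessed" flag set; if so we output the corresponding $S$ (reconstructed by backtracking), otherwise we report that no independent cutset exists.

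The main obstacle I anticipate is getting the connectivity bookkeeping exactly right at join nodes while keeping the state count within the claimed bound, and in particular arguing carefully that the number of connectivity classes one must distinguish is bounded by a function of $\ell$ and $k$ alone. The point is that the vertices of a bag that are relevant as "component representatives" are the vertices of $U_t$ together with the at most $k$ residual vertices in $S$ — but the residual non-cutset vertices of $X_t\setminus U_t$ also need class labels, and there can be up to $n$ of them, so I must be careful that labeling them costs only a factor $(\text{number of classes})^{\,\text{(number of such vertices)}}$... which is too much unless I observe that a residual non-cutset vertex's class is determined by the DP transitions rather than guessed. The correct framing, which I would spell out, is: we guess only the partition of $U_t \cup (S\cap(X_t\setminus U_t))$ into connectivity classes (this is a function of $\ell$ and $k$ only), plus the subset $S\cap(X_t\setminus U_t)$ (this is the $\calO(n^k)$ factor), and every other bag vertex inherits its class from adjacencies inside $G_t$; so no extra guessing is needed and the total is $\calO^*(3^\ell (2n)^k)$ as claimed. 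Verifying that this inherited-class rule is consistent across introduce/forget/join transitions is the technical heart of the argument, but it follows the same pattern as the classical Steiner-tree / connected-vertex-cover DPs, so no genuinely new idea is required — only careful case analysis.
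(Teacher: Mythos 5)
Your high-level plan coincides with the paper's: a bottom-up DP on the nice $\ell$-refined tree decomposition whose state combines a three-way status (cutset / one side / other side) for the at most $\ell$ vertices of $U_t$ with an enumeration of the at most $k$ vertices of the candidate cutset inside $X_t\setminus U_t$, the latter giving the $\calO(n^k)$ factor exactly as you argue. The correctness skeleton (introduce/forget/join transitions, a witness that a split has occurred, empty root bag) is also fine in outline.

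The genuine problem is the connectivity-class bookkeeping, in two respects. First, your running-time claim does not hold as argued: a partition of the $\le \ell+k$ bag ``atoms'' into connectivity classes contributes a multiplicative factor of Bell-number type in $\ell+k$ (or at least $(\ell+k)^{k}$ for assigning the residual bag-components to classes), and this is \emph{not} absorbed into $\calO^*(3^\ell(2n)^k)$ --- the $\calO^*$ suppresses only factors polynomial in $n$, and ``the $2$ in $(2n)^k$'' certainly does not cover a Bell number in the parameters. As written you prove a bound of the form $\calO^*(g(\ell,k)\,n^k)$ with $g$ strictly larger than $3^\ell 2^k$, i.e.\ a weaker statement than the theorem. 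Second, the classes themselves are misidentified: components of $G[V_t]-S^*$ are generated by the \emph{non}-cutset bag vertices, not by ``$U_t$ together with the at most $k$ residual vertices in $S$'', and the class of a residual non-cutset vertex is \emph{not} determined by bag-local adjacencies --- it depends on already-forgotten vertices, so it is genuine state that must be enumerated, not inherited for free. The missing observation that repairs both issues (and is what the paper uses) is that a cutset problem needs no connectivity classes at all: it suffices to track a partition $(S,A,B)$ of $X_t$ with $S$ independent and no edge between $A$ and $B$, with the semantics that $A$ and $B$ end up in distinct components of $G[V_t]-S^*$. Then one only needs that $G[X_t\setminus(U_t\cup S)]$ has at most $k$ components (one representative per component is an independent set of $G[X_t\setminus U_t]$), each of which must go wholly to $A$ or wholly to $B$ --- exactly $2^k$ choices --- yielding $3^\ell\cdot n^k\cdot 2^k$ states per bag and the claimed bound. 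You should replace the Steiner-tree-style class machinery by this two-sided partition; with that change your argument matches the paper's.
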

\begin{proof}
Let $G$ and $\hat{\calT}$ be as in the statement, and let $r$ be the root of $T$.
Let $V_t$ denote the union of all sets $X_{t'}$ such that $t' \in V(T)$ is a descendant of $t$ or $t'=t$.
We call a partition $(S, A, B)$ of $X_t$ such that $S$ is independent and there is no edge crossing from $A$ to $B$ in $G$ a \emph{potential $t$-partition}.

We outline a dynamic programming algorithm that operates on the nice tree decomposition $\hat{\calT}$ in a bottom-up manner. 
At each node $t \in V(T)$, and for all potential $t$-partitions, we compute a Boolean variable $\cut[t, S, A, B]$ that is \textup{\texttt{true}} if and only if there is an independent cutset $S^* \supseteq S$ of $G$ such that $S^* \cap X_t = S$, and $A$ and $B$ are in distinct components of $G[V_t] - S^*$.
After computing these variables, we return $\cut[r, \emptyset, \emptyset, \emptyset]$, which gives the correct answer.
\version{For the sake of exposition, we explain how to compute these variables in Appendix~\ref{app_dp}.}{For the sake of exposition, we refer the reader on how to compute these variables to our arxiv article~\cite{rauch2023exact}.}


The key property of the algorithm is the following: Given that $|U_t| \leq \ell$, there are at most $3^\ell$ ways to partition $U_t$ into three sets. 
Since $\alpha(G[X_t \setminus U_t]) \leq k$, it holds that for any set $S \subseteq X_t \setminus U_t$, the number of components of $G[X_t \setminus (U_t\cup S)]$ is at most $k$.
Thus, there are at most $\calO(n^k)$ independent subsets $S$, and there are at most $2^k$ ways to partition the components of $G[X_t \setminus (U_t \cup S)]$. 
The computational effort to compute one variable $\cut[t, S, A, B]$ is polynomial.
Therefore the overall running time is $\calO^*(3^l(2n)^k)$. 
At this point, it is not hard to see how to perform such a dynamic programming using a nice tree decomposition within the claimed running time. 
This completes the proof of Theorem~\ref{thm_resind}. 
\QED \end{proof}

Let $G$ be a graph.
A \emph{chordal (vertex-)deletion} of $G$ is a set of vertices $X$ such that $G-X$ is chordal.
The problem of deciding whether a graph admits a chordal deletion of a fixed size $\ell$ is \NP{}-complete~\cite{garey_johnson_1990}.
Marx~\cite{marx_2010} showed that this problem is fixed-parameter tractable when parameterized by $\ell$.

\begin{corollary}
Let $G$ be the connected input graph, and let $\ell$ be the size of a chordal deletion of $G$.
If $\calO^*(f(\ell)$ is the running time of an algorithm computing a chordal deletion of size $\ell$ of a graph,
then \independentcutset{} can be solved in $\calO^*(f(\ell) + 3^\ell)$ time.
\end{corollary}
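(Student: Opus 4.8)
The plan is to reduce the problem to Theorem~\ref{thm_resind} with residual independence number $k = 1$, so that the factor $(2n)^k$ becomes linear and the running time of Theorem~\ref{thm_resind} collapses to $\calO^*(3^\ell)$. First I would run the algorithm of Marx~\cite{marx_2010} to compute a chordal deletion $X$ of $G$ with $|X| = \ell$; this is the only place the $f(\ell)$ term enters. The graph $G - X$ is then chordal, so applying Theorem~\ref{thm_cliquetree} to each connected component of $G - X$ and joining the resulting clique trees into a single tree (by adding arbitrary edges between them, which is harmless since there are no edges of $G-X$ between distinct components) yields, in polynomial time, a tree decomposition $\calT_0 = (T_0, \{X_t : t \in V(T_0)\})$ of $G - X$ in which every bag induces a clique.

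Next I would lift $\calT_0$ to a tree decomposition of $G$ by replacing each bag $X_t$ with $X_t \cup X$. The three defining conditions of a tree decomposition are immediate: the vertices of $X$ now lie in every bag, edges inside $G - X$ are covered because $\calT_0$ covers them, edges incident with $X$ are covered because $X$ lies in every bag, and the subtree condition holds trivially for vertices of $X$ and by $\calT_0$ for vertices of $G - X$. Invoking Lemma~\ref{lem_nice}, I would turn this into a nice tree decomposition $\calT = (T, \{Y_t : t \in V(T)\})$ of the same width, with polynomially many nodes, in polynomial time; note that passing to a nice tree decomposition only ever shrinks bags, so every $Y_t$ is contained in some bag $X_{t_0} \cup X$ of the lifted decomposition.

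Finally, I would make $\calT$ into an $\ell$-refined tree decomposition by setting $U_t = Y_t \cap X$ for every node $t$, so that $|U_t| \le \ell$. Then $Y_t \setminus U_t \subseteq X_{t_0}$ is a subset of a clique of $G - X$, hence $G[Y_t \setminus U_t]$ is a complete graph and $\alpha(G[Y_t \setminus U_t]) \le 1$; that is, $\hat\calT = (T, \{(Y_t, U_t) : t \in V(T)\})$ is a nice $\ell$-refined tree decomposition of $G$ with residual independence number at most $1$. Feeding $\hat\calT$ to Theorem~\ref{thm_resind} with $k = 1$ decides \independentcutset{} in $\calO^*(3^\ell \cdot (2n)^1) = \calO^*(3^\ell)$ time, and adding the cost of computing $X$ gives the claimed bound $\calO^*(f(\ell) + 3^\ell)$.

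I do not expect a real obstacle here: all the heavy lifting sits in Theorem~\ref{thm_resind} and in Marx's fixed-parameter algorithm, both of which may be assumed. The only points needing a little care are handling a possibly disconnected $G - X$ when applying Theorem~\ref{thm_cliquetree} (solved by joining the clique trees of the components) and verifying that ``nicifying'' the lifted decomposition preserves the residual-independence-number bound (it does, since it only shrinks bags) --- both addressed above.
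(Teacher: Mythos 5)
Your proposal matches the paper's proof essentially step for step: compute the chordal deletion $X$, take a clique tree of $G-X$, add $X$ to every bag, nicify, set $U_t = X_t \cap X$, and invoke Theorem~\ref{thm_resind} with residual independence number $1$. The only additions are your explicit handling of a disconnected $G-X$ and the observation that nicification preserves the clique property of $X_t \setminus X$ because bags only shrink --- both points the paper glosses over but implicitly relies on, so your argument is correct and not a different route.
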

\begin{proof}
First, we compute a chordal deletion $X$ of size $\ell$ of $G$.
Then, since $G-X$ is chordal, we compute a clique tree $\calT$ of $G-X$, which is possible in polynomial time~\cite{blair_peyton_1993}. 
To obtain a tree decomposition of $G$, we add the vertices of $X$ to every bag of $\calT$. 
We transform this tree decomposition into a nice tree decomposition $\hat\calT = (T, \{X_t: t \in V(T) \})$ in polynomial time
\version{, which is possible due to Lemma~\ref{lem_nice}.}{.}
Note that this can be done in a way such that for every node $t \in V(T)$, the set $X_t \setminus X$ still induces a clique in $G$.
By letting $U_t = X \cap X_t$ for every node $t \in V(T)$, we augment $\hat\calT$ to a nice $\ell$-refined tree decomposition with residual independence number 1.
Now the statement follows from Theorem~\ref{thm_resind}. 
\QED \end{proof}

\subsection{Distance to $P_5$-free graphs} \label{sec_p5}
A \emph{$P_5$-hitting set} of $G$ is a set of vertices $X$ such that $G-X$ is $P_5$-free.
In this subsection, we consider \independentcutset{} parameterized by the size of a $P_5$-hitting set.
Bacsó and Tuza~\cite{bacso_tuza_1990} showed that any connected $P_5$-free graph has a dominating clique or a dominating $P_3$.
Camby and Schaudt~\cite{camby_schaudt_2016} generalized this result and showed that such a dominating set can be computed in polynomial time.
We use the statement of the following corollary to prove a stronger statement in Theorem~\ref{thm_p5}.

\begin{corollary} \label{cor_p5}
\independentcutset{} can be solved in polynomial time for connected $P_5$-free graphs.
\end{corollary}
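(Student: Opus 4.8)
The plan is to reduce \independentcutset{} on connected $P_5$-free graphs to the dominating-set variant handled by Theorem~\ref{thm_domindcs}, and then argue that the parameter in that reduction is a constant. First I would dispose of an easy case: if $G$ contains a vertex $v$ that is not in any triangle, then $N_G(v)$ is an independent cutset (separating $v$ from the rest of $G$, since $G$ is connected and has more than one vertex), so we can answer \texttt{yes} immediately; likewise if $|V(G)| \le 2$ we can answer directly. Hence we may assume every vertex of $G$ lies in a triangle, which in particular means $G$ has no isolated vertex and the dominating sets we produce will behave well.

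Next I would invoke Bacsó and Tuza~\cite{bacso_tuza_1990} together with the algorithmic strengthening of Camby and Schaudt~\cite{camby_schaudt_2016}: a connected $P_5$-free graph has a dominating clique or a dominating $P_3$, and such a set $D$ can be found in polynomial time. Either way $D$ is a dominating set of $G$, but its size is not bounded. If $D$ is a dominating $P_3$ then $|D| = 3$ and we are done: run the algorithm of Theorem~\ref{thm_domindcs} with $X = D$, which takes $\calO^*(3^{3}) = \calO^*(1)$ time, i.e.\ polynomial time. The remaining case is when $D$ is a dominating clique, possibly large. Here the key observation is that a clique has independence number $1$, so any independent set of $G$ contains at most one vertex of $D$. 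I would use this to shrink the dominating set: pick any vertex $d \in D$; since $D$ is a clique, $N_G[d] \supseteq D$, so $\{d\}$ together with any one neighbor-free witness already dominates much of the graph. More carefully, I would replace $X$ by a set of the form $\{d\} \cup R$ where $R = V(G) \setminus N_G[d]$ is the set of vertices not dominated by $d$ alone; then $X$ is a dominating set, but $R$ could still be large. The honest fix is different: since $D$ is a clique and every independent cutset meets $D$ in at most one vertex, we can simply guess that one vertex (at most $|D|+1$ choices, hence polynomially many branches). For each guess $d \in D \cup \{\text{"none"}\}$, we look for an independent cutset $S^*$ with $S^* \cap D = \{d\}$ (resp.\ $S^* \cap D = \emptyset$); contracting or deleting the rest of $D$ appropriately yields an instance where the relevant dominating structure has bounded size, and then Theorem~\ref{thm_domindcs} applies with a constant parameter.

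The cleanest route, and the one I would write up, is: branch on which vertex of the dominating clique $D$ (if any) belongs to the sought independent cutset; in each branch the vertices of $D$ other than the chosen one are forbidden from the cutset, so they all lie in components of $G$ minus the cutset, and because $D$ is a clique they all lie in the \emph{same} component; this is exactly the scenario of Lemma~\ref{lem_domindcs1} (the dominating set is not split), so we can run that $\calO^*(2^k)$ algorithm with $k = |D \setminus \{d\}| \cup \{d\}$ --- but to keep $k$ bounded we instead note that we only need a dominating set of bounded size, which $\{d\}$ plus the two endpoints of a dominating $P_3$ within the clique neighborhood provides, or we fall back to the $P_3$ case. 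I expect the main obstacle to be precisely this bookkeeping: turning "the dominating clique is large but independence-$1$" into a genuinely bounded-parameter instance of Theorem~\ref{thm_domindcs}, rather than just a polynomially-branched search. If that tightening proves awkward, the safe fallback is to observe directly that when $D$ is a dominating clique, an independent cutset exists if and only if a short, explicitly describable local configuration around $D$ exists (at most one cutset vertex in $D$, its at-most-$|D|$ neighbors, and the $P_5$-freeness forcing the rest of the graph to be close to $D$), all checkable in polynomial time; the $P_5$-free structure is what prevents long induced paths from escaping $N_G[D]$ and is the technical heart of the argument.
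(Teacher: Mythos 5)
Your overall strategy is the paper's: compute the Camby--Schaudt dominating set $D$ (a clique or a $P_3$) and feed it to Theorem~\ref{thm_domindcs}. The $P_3$ case is fine. But the dominating-clique case, which you yourself flag as the obstacle, is left with a genuine gap, and the gap comes from treating the running time of Theorem~\ref{thm_domindcs} as a black box. The $\calO^*(3^k)$ there (and the $\calO^*(2^k)$ of Lemma~\ref{lem_domindcs1}) is incurred \emph{only} in enumerating the candidate partitions $(A,X')$ resp.\ $(A,B,X')$ of the dominating set with $X'$ independent and no $A$--$B$ edge (plus the initial check of independent subsets of $X$); each single iteration costs polynomial time in $n$. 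When $D$ is a clique, an independent $X'\subseteq D$ has at most one vertex, and $A,B\subseteq D$ cannot both be nonempty without an edge between them, so there are only $|D|+1$ relevant partitions, all of type $(A,X')$ --- consistent with your correct observation that $D\setminus S^*$ is a clique and hence lies in a single component, so only the non-split case of Lemma~\ref{lem_domindcs1} arises. Running that procedure over just these $|D|+1$ partitions, which is exactly your ``guess the unique $d\in D\cap S^*$'' branching, is therefore already polynomial. No bound on $|D|$ is needed and there is no bookkeeping left to do.

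Instead of this observation you reach for two substitutes, neither of which works. Shrinking the dominating clique to a dominating set of bounded size is impossible in general (a graph with a dominating clique need not admit any bounded-size dominating set), and your fallback claim --- that for a dominating clique an independent cutset is equivalent to some ``short local configuration'' because $P_5$-freeness keeps induced paths near $N_G[D]$ --- is asserted without proof and plays no role in the actual argument: the paper uses $P_5$-freeness only to obtain the dominating clique or $P_3$ in the first place. The fix is small but essential: open up the proofs of Lemmata~\ref{lem_domindcs1} and~\ref{lem_domindcs2}, note that the exponential factor is purely the number of enumerated partitions of $X$, and count those partitions when $G[X]$ is a clique or a $P_3$.
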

\begin{proof}
Let $G$ be a connected $P_5$-free graph. 
We compute a dominating set $X$ as in~\cite{camby_schaudt_2016} in polynomial time, that is, the set $X$ induces either a clique or a $P_3$ in $G$.
We invoke the algorithm of Theorem~\ref{thm_domindcs} with $G$ and $X$ as input.
Since $X$ induces a clique or a $P_3$ in $G$, there are at most $\calO(n)$ partitions $(A, X')$ or $(A, B, X')$ of $X$ such that there is no edge between $A$ and $B$ in $G$, and $X'$ is independent in $G$. 
Given that the relevant partitions of $X$ can be enumerated in polynomial time, the remaining steps of the algorithm of Theorem~\ref{thm_domindcs} can solve \independentcutset{} in polynomial time for $P_5$-free graphs. 
\QED \end{proof}

For Theorem~\ref{thm_p5} we need the following fact.
\begin{proposition} \label{prop_p5}
A $P_5$-hitting set of $G$ with size $k$ (if any) can be found in \FPT-time with respect to $k$.
\end{proposition}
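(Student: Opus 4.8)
The plan is to reduce the problem to a bounded-search-tree argument of the same flavour used for other "hitting set for a finite family of small forbidden induced subgraphs" problems. First I would observe that a graph $G$ is $P_5$-free if and only if it contains no induced path on five vertices, i.e.\ no induced subgraph isomorphic to $P_5$; since $P_5$ has exactly five vertices, one can in polynomial time (e.g.\ $\calO(n^5)$ by brute force, or faster) either certify that $G$ is $P_5$-free or exhibit a concrete vertex set $\{v_1,\dots,v_5\}$ inducing a $P_5$. This is the branching object.

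Next I would set up the standard recursive algorithm: given $(G,k)$, first test whether $G$ is $P_5$-free; if so, output $\emptyset$ as a (trivial) $P_5$-hitting set and stop. Otherwise find an induced $P_5$ on vertices $\{v_1,\dots,v_5\}$. Any $P_5$-hitting set must contain at least one of these five vertices, so I would branch into five subinstances $(G-v_i,\,k-1)$ for $i\in\{1,\dots,5\}$; if some branch returns a hitting set $X'$ of $G-v_i$ with $|X'|\le k-1$, then $X'\cup\{v_i\}$ is a hitting set of $G$ of size at most $k$. If $k$ reaches $0$ and $G$ is still not $P_5$-free, that branch fails. Correctness is immediate by induction on $k$ together with the "at least one of five vertices is hit" invariant, and the recursion depth is at most $k$ with branching factor $5$, so the search tree has $\calO(5^k)$ leaves, each requiring polynomial work to locate an induced $P_5$; hence the total running time is $\calO^*(5^k)$, which is \FPT-time in $k$. (One may remark that essentially the same argument gives a $d$-hitting-set bound, but $5^k$ suffices for the statement.)

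The main obstacle here is genuinely minor: it is simply making sure that detecting an induced $P_5$ — as opposed to just any $P_5$ as a subgraph — is done correctly, since the hitting-set formulation is in terms of induced subgraphs (a $P_5$-free graph is one with no induced $P_5$). This is handled by checking, for each candidate $5$-subset, whether the induced subgraph on it is isomorphic to $P_5$, which is a constant-size check per subset. No deeper structural insight (such as the Bacsó--Tuza dominating-clique-or-$P_3$ theorem) is needed for this proposition; those tools are only invoked afterwards in the proof of Theorem~\ref{thm_p5}. If one wishes to avoid even restating the branching argument, it can alternatively be cited as an instance of the general fact that $\mathcal{H}$-free vertex deletion for a finite family $\mathcal{H}$ of bounded size is \FPT\ by bounded search trees.
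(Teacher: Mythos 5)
Your proposal is correct and matches the paper's argument, which simply states that a bounded search tree algorithm achieves $\calO^*(5^k)$ time; you have spelled out exactly that algorithm (find an induced $P_5$, branch on its five vertices, recurse with $k-1$). No further comparison is needed.
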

\begin{proof}
A simple bounded search tree algorithm has $\calO^*(5^k)$ running time. 
\QED \end{proof}

We will now prove the main theorem of this subsection.
\begin{theorem} \label{thm_p5}
If a $P_5$-hitting set with at most $k$ vertices of a connected graph $G$ can be computed in $\calO^*(f(k))$ time, then \independentcutset{} can be solved in $\calO^*(f(k) + 3^k)$ time.
\end{theorem}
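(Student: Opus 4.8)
The plan is to first compute a $P_5$-hitting set $X$ of $G$ with $|X| \le k$ in $\calO^*(f(k))$ time; this is the only step that uses $f(k)$, so everything that follows must run in $\calO^*(3^k)$ time. As in the proof of Corollary~\ref{cor_p5}, the graph $G - X$ is $P_5$-free, hence each of its connected components $C_1, \dots, C_p$ is a connected $P_5$-free graph and, by Bacsó and Tuza~\cite{bacso_tuza_1990} and Camby and Schaudt~\cite{camby_schaudt_2016}, admits in polynomial time a dominating set $D_i$ that induces a clique or a $P_3$ in $C_i$. Consequently $D := X \cup D_1 \cup \dots \cup D_p$ is a dominating set of $G$, and one would like to invoke Theorem~\ref{thm_domindcs} with $D$ as the additional input. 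The obstruction is that $|D|$ is not bounded in terms of $k$, so enumerating all $3^{|D|}$ partitions of $D$ is far too expensive; the heart of the proof is to show that this enumeration factorizes over the components of $G - X$.

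Concretely, I would proceed along the dichotomy of Lemmata~\ref{lem_domindcs1} and~\ref{lem_domindcs2}. First, an independent cutset $S^* \supseteq X$ is dealt with trivially: then $X$ is independent, so it suffices to test whether $G - X$ is disconnected and, if one wants a minimum certificate, to shrink $X$ via Lemma~\ref{lem_min_indcs}; from now on assume $X \not\subseteq S^*$. In the ``$X$ is split'' regime I would guess the trace of a hypothetical minimal independent cutset $S^*$ on $X$, namely a partition $(A_X, B_X, X')$ of $X$ with $X' = X \cap S^*$ independent, no edge between $A_X$ and $B_X$, and $A_X, B_X$ the parts of $X$ lying on two distinct sides of $G - S^*$; there are at most $3^k$ such guesses, and in the ``$X$ is not split'' regime one instead guesses only $X' = X \cap S^*$, at most $2^k$ guesses. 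The crucial point is that once this trace is fixed, the problem decouples: the components $C_i$ of $G - X$ interact only through $X$, which is now fully decided, so $S^*$ exists extending the guess if and only if for every $i$ there is an independent set $S^*_i \subseteq C_i$ avoiding $N_G(X')$ such that in $G[C_i] - S^*_i$ no vertex of $N_G(A_X) \setminus S^*_i$ is connected to a vertex of $N_G(B_X) \setminus S^*_i$ (in the not-split regime one instead asks some $C_i$ to contain, after removing $S^*_i$, a nonempty piece with no neighbour in $X \setminus X'$).

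Each of these per-component questions is, up to the prescribed forbidden set $N_G(X') \cap C_i$ and the prescribed terminal sets $N_G(A_X) \cap C_i$ and $N_G(B_X) \cap C_i$, exactly the sort of separating-independent-cutset task carried out inside Lemma~\ref{lem_domindcs2}: since $C_i$ is $P_5$-free and carries the dominating clique or $P_3$ $D_i$, one guesses the trace of $S^*_i$ on $D_i$ (only $\calO(n)$ relevant choices, as in Corollary~\ref{cor_p5}), propagates forced and forbidden vertices, and closes with the {\sc 2-SAT} test of Lemma~\ref{lem_2sat}; the additional forbidden/terminal data only fix or delete further vertices and do not affect polynomiality, and $P_5$-freeness is used solely to bound the number of traces on $D_i$. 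Thus for each of the $\le 3^k$ traces and each of the $\le n$ components we spend polynomial time; if some trace makes all components succeed, the union of $X'$ with one valid $S^*_i$ per component is an independent cutset of $G$ (its independence is immediate since distinct $C_i$ are non-adjacent and none meets $N_G(X')$; that it is a cutset follows by checking that along any path every $X$–$X$ step stays on one side while every excursion through a $C_i$ is blocked by $S^*_i$), and conversely any independent cutset yields a successful trace; adding the $\calO^*(f(k))$ for $X$ gives the claimed $\calO^*(f(k) + 3^k)$ bound. I expect the main obstacle to be bookkeeping rather than a new idea: precisely formulating the per-component subproblems so that they genuinely decouple, so that both regimes of Lemmata~\ref{lem_domindcs1} and~\ref{lem_domindcs2} are covered including the degenerate cases $A_X = \emptyset$ or $B_X = \emptyset$, and so that the $P_5$-free solver of Corollary~\ref{cor_p5} really absorbs the extra forbidden and terminal sets; the one genuinely argumentative point is the path-decomposition verification that $G - S^*$ is disconnected.
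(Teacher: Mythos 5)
Your proposal is correct and follows essentially the same route as the paper's proof: compute the hitting set $X$, equip each component of $G-X$ with a Bacs\'o--Tuza/Camby--Schaudt dominating clique or $P_3$, enumerate the $\calO^*(3^k)$ (resp.\ $\calO^*(2^k)$) traces of the cutset on $X$ according to the split/not-split dichotomy of Lemmata~\ref{lem_domindcs1} and~\ref{lem_domindcs2}, and solve the resulting per-component subproblems in polynomial time via the restricted version of the Theorem~\ref{thm_domindcs} machinery before taking the union. Your explicit justification of why the subproblems decouple across components (any $A_X$--$B_X$ path must switch sides via an excursion through a single component, and the pieces of the union stay independent because distinct components are non-adjacent and each piece avoids $N_G(X')$) is a point the paper leaves largely implicit, but it is the same argument.
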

\begin{proof}
Let $G$ and $k$ be as in the statement. 
First, we compute a $P_5$-hitting set $X$ of size at most $k$ of $G$.
Let $K_1, \dots, K_\ell$ be the components of $G - X$.
Then we compute a dominating set $X_i$ of $G[K_i]$ as in \cite{camby_schaudt_2016}, that is, the set $X_i$ induces either a clique or a $P_3$ in $G[K_i]$.
Note that $X_i \cup X$ dominates $G[K_i \cup X]$ for every $i \in [\ell]$.
Assume that $G$ admits an independent cutset $S^*$.
There are two cases: Either $X \setminus S^*$ is contained in at most one component of $G - S^*$, or not.
We explain for both cases how to compute an independent cutset of $G$ in time $\calO^*(3^k)$.

\emph{Case 1:} $X \setminus S^*$ is contained in at most one component of $G - S^*$. 
This implies that there exists some $i \in [\ell]$ such that $S^* \cap (K_i \cup X)$ is an independent cutset of $G[K_i \cup X]$. 
We iterate over all disjoint partitions $(A^*, X^*)$ of $X$ such that $X^*$ is independent in $G$. 
In one iteration, we guess $S^* \cap X$ as $X^*$.
Now we invoke a modified version of the algorithm of Theorem~\ref{thm_domindcs} with $G[K_i \cup X]$ and $X_i \cup X$ as input.
This version only considers partitions $(A, X')$ or $(A, B, X')$ of $X$ with $A^* \subseteq A$ and $X^* \subseteq X'$.
As in Corollary~\ref{cor_p5}, this takes only polynomial time since the number of relevant partitions is polynomial.
By Theorem~\ref{thm_domindcs}, the modified algorithm returns an independent cutset of $G[K_i \cup X]$, which is also an independent cutset of $G$.
The described procedure can be implemented to run $\calO^*(2^k)$ time.

\emph{Case 2:} $X \setminus S^*$ is contained in at least two components of $G - S^*$. 
We iterate over all disjoint partitions $(A^*, B^*, X^*)$ of $X$ such that $X^*$ is an independent cutset of $G[X]$ that separates $A^* \neq \emptyset$ and $B^* \neq \emptyset$.
In one iteration, we guess $A^*$ and $B^*$ such that they are in different components of $G-S^*$, and we guess $S^* \cap X$ as $X^*$.
This time, we invoke a modified version of the algorithm of Lemma~\ref{lem_domindcs2} with $G[K_i \cup X]$ and $X_i \cup X$ as input.
This version only considers partitions $(A, B, X')$ with $A^* \subseteq A$, $B^* \subseteq B$ and $X^* \subseteq X'$.
As in Corollary~\ref{cor_p5}, this takes only polynomial time. 
The modified algorithm returns, for every $i \in [\ell]$, an independent cutset $S_i$ of $G[X_i \cup X]$ with the following properties: the set $S_i$ separates $A^*$ and $B^*$ in $G[X_i \cup X]$, and $X^* \subseteq S_i$. Now, $\bigcup_{i \in \ell} S_i$ is an independent cutset of $G$. The described procedure for this can be implemented to run in $\calO^*(3^k)$ time.

The overall running time is $\calO^*(f(k) + 3^k)$, which completes the proof.
\QED \end{proof}

\subsection{Generalizing distance to $P_5$-free graphs} \label{sec_gen_p5}
In this section, we generalize Corollary~\ref{cor_p5} and Theorem~\ref{thm_p5}.
Let $G$ be a graph, and let $\mathcal{G}$ be a class of graphs.
We say that a set of vertices $X$ is a \emph{$\alpha_k$-dominating set} of $G$ if $X$ is a dominating set of $G$ and $\alpha(G[X]) \leq k$.
If $G$ admits a $\alpha_k$-dominating set, we say that $G$ is \emph{$\alpha_k$-dominated}. In addition, if such a set can be computed in polynomial time, we say that $G$ is \emph{efficiently $\alpha_k$-dominated}.
The \emph{(efficient) $\alpha$-domination number} of $G$ is the minimum $k$ such that $G$ is (efficient) $\alpha_k$-dominated.
We say that a graph class $\mathcal{G}$ is (efficiently) $\alpha_k$-dominated if every graph $G \in \mathcal{G}$ is (efficiently) $\alpha_k$-dominated.
For example, Bacsó and Tuza~\cite{bacso_tuza_1990} as well as Cozzens and Kelleher~\cite{cozzens_kelleher_1990} independently proved that $\{P_5, C_5\}$-free graphs are $\alpha_1$-dominated, that is, they contain a dominating clique.
Another example is the class of $P_5$-free graphs, which is efficiently $\alpha_2$-dominated. 

We start by generalizing the polynomial time result on $P_5$-free graphs. 
The proof is similar to the one of Corollary~\ref{cor_p5}.

\begin{lemma}\label{lem_alpha-dominated}
\independentcutset{} on $\mathcal{G}$ can be solved in polynomial time, whenever $\mathcal{G}$ is an efficiently $\alpha_c$-dominated graph class for some constant $c$. 
\end{lemma}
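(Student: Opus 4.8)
The plan is to mimic the proof of Corollary~\ref{cor_p5}, replacing the special structure ``clique or $P_3$'' by the general bound $\alpha(G[X]) \leq c$. First I would dispose of the trivial case: if some vertex of $G$ is not contained in a triangle, return its neighborhood as an independent cutset; hence we may assume every vertex lies in a triangle. Then, since $\mathcal{G}$ is efficiently $\alpha_c$-dominated, compute in polynomial time a dominating set $X$ of $G$ with $\alpha(G[X]) \leq c$. We then invoke the algorithm of Theorem~\ref{thm_domindcs} with $G$ and $X$ as additional input. The only reason that algorithm is not polynomial in general is that it enumerates all partitions $(A,X')$ or $(A,B,X')$ of $X$ with $X'$ independent and no edge between $A$ and $B$ (in the second case); there are up to $3^{|X|}$ such partitions. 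The crux is therefore to argue that, when $\alpha(G[X]) \leq c$, the number of \emph{relevant} partitions is only polynomial in $n$.

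The key combinatorial observation is the following. In any partition $(A,B,X')$ with $X'$ independent and no edge from $A$ to $B$, the set $X'$ is an independent set in $G[X]$, so $|X'| \leq c$; thus there are only $\calO(n^c)$ choices for $X'$. Having fixed $X'$, the sets $A$ and $B$ form a partition of $X \setminus X'$ into two parts with no edge between them, i.e.\ a $2$-colouring of the connected components of $G[X \setminus X']$; but $G[X \setminus X']$ has at most $c$ components (any transversal of the components is an independent set of $G[X]$, hence has size $\le c$), so there are at most $2^c$ such partitions. Altogether the number of relevant partitions $(A,B,X')$ — and similarly $(A,X')$ — is $\calO(n^c) \cdot 2^c = \calO(n^c)$, which is polynomial since $c$ is constant. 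These partitions can be enumerated in polynomial time by first enumerating independent sets of size at most $c$ in $G[X]$ and then, for each, enumerating the $2$-colourings of the (at most $c$) components of $G[X \setminus X']$.

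With this in hand, the rest goes exactly as in Corollary~\ref{cor_p5}: for each relevant partition we run the corresponding branch of the algorithms behind Lemma~\ref{lem_domindcs1} and Lemma~\ref{lem_domindcs2}, each of which performs only polynomially much additional work (a hypergraph min-cut computation, resp.\ a \textsc{2-SAT} test via Lemma~\ref{lem_2sat}). Since $X$ is a dominating set of $G$, the case analysis of Theorem~\ref{thm_domindcs} is exhaustive: either the dominating set $X$ is not split by the sought independent cutset (handled by Lemma~\ref{lem_domindcs1}), or it is split (handled by Lemma~\ref{lem_domindcs2}), and in both cases the algorithm returns an independent cutset of $G$ if one exists. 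The total running time is polynomial, which proves the lemma.

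\textbf{Main obstacle.} The only real point requiring care is the counting argument above — specifically, verifying that ``$\alpha(G[X]) \le c$'' genuinely forces both $|X'| \le c$ and ``at most $c$ components of $G[X \setminus X']$'', so that the enumeration of relevant partitions of $X$ stays polynomial; once this is established, everything else is a direct invocation of Theorem~\ref{thm_domindcs} and its supporting lemmata.
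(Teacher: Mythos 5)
Your proposal is correct and follows essentially the same route the paper intends (the paper only sketches this lemma by reference to Corollary~\ref{cor_p5}): compute the $\alpha_c$-dominating set $X$ in polynomial time and observe that, since every independent subset of $X$ has size at most $c$ and $G[X\setminus X']$ has at most $c$ components, only $\calO(n^c)\cdot 2^c$ relevant partitions of $X$ arise in Theorem~\ref{thm_domindcs}, each processed in polynomial time. The initial ``every vertex lies in a triangle'' reduction is borrowed from Corollary~\ref{cor_bip} and is superfluous here, but harmless.
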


In~\cite{penrice_1995}, Penrice presented some families of $\alpha_k$-dominated graph classes. 
For example, Penrice shows that connected $\{P_6, H_{t+1}\}$-free graphs are $\alpha_t$-dominated, where $H_{t+1}$ denotes the graph obtained by subdividing each edge of a $K_{1,{t+1}}$.
We are currently unaware if they are efficiently $\alpha_t$-dominated, too, since Penrice uses a minimality argument of Bacs\'o and Tuza~\cite{bacso_tuza_1990}.
Penrice also shows that $tK_2$-free graphs without isolated vertices are $\alpha_{2t-2}$-dominated. 
From a thorough reading in~\cite{penrice_1995}, it can be seen that $tK_2$-free graphs without isolated vertices are efficiently $\alpha_{2t-2}$-dominated. Thus, from Lemma~\ref{lem_alpha-dominated} and the results in~\cite{penrice_1995} the following holds.  

\begin{corollary} \label{cor_p6_fork}
\independentcutset{} can be solved in polynomial time for ${tK_2}$-free graphs for any constant $t \geq 1$.
\end{corollary}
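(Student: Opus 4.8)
The plan is to obtain Corollary~\ref{cor_p6_fork} as a direct application of Lemma~\ref{lem_alpha-dominated} combined with a structural result of Penrice~\cite{penrice_1995} on $tK_2$-free graphs. Fix the constant $t \ge 1$ and let $G$ be the connected input graph. I claim that the class of connected $tK_2$-free graphs is efficiently $\alpha_c$-dominated for $c = \max(1, 2t-2)$, which is a constant. If $G = K_1$ this is witnessed by the dominating set $\{v\}$ (and in fact $K_1$ has no independent cutset, so the case $t = 1$ is trivial). Otherwise $G$ has at least two vertices, so connectedness forces $G$ to have no isolated vertex, and then Penrice's theorem supplies a dominating set $X$ of $G$ with $\alpha(G[X]) \le 2t-2$. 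Granting that such an $X$ can be computed in polynomial time, Lemma~\ref{lem_alpha-dominated} applies verbatim and yields the claimed polynomial-time algorithm.

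Thus the only real work is the second step: showing that the $\alpha_{2t-2}$-domination of $tK_2$-free graphs without isolated vertices is \emph{efficient}. Penrice's original proof establishes only existence, via a minimality argument in the spirit of Bacsó and Tuza~\cite{bacso_tuza_1990}, and one has to be careful here -- it is \emph{not} enough to take an arbitrary inclusion-minimal dominating set. Already for $t = 2$ there is a $2K_2$-free graph on the six vertices $x, y, z, p, q, r$ with edge set $\{xp, yq, zr\} \cup \{pq, pr, qr\}$ in which $\{x, y, z\}$ is a minimal dominating set that is independent, so $\alpha(G[\{x,y,z\}]) = 3 > 2t-2$, even though $\{p, q, r\}$ is a dominating clique. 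The plan is therefore to trace through the proof in~\cite{penrice_1995} and check that each of its existential choices can be made by a polynomial-time subroutine -- a maximal induced matching of a $tK_2$-free graph has at most $t-1$ edges and can be built greedily, and the subsequent selections are local -- while verifying that the bound $\alpha(G[X]) \le 2t-2$ is maintained throughout.

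I expect this algorithmization of Penrice's argument to be the main obstacle; everything else is bookkeeping, since the reduction from efficient $\alpha$-domination to \independentcutset{} is exactly the content of Lemma~\ref{lem_alpha-dominated}. As a sanity check, for $t = 2$ the resulting statement reproves Corollary~\ref{cor_2k2} (polynomial-time solvability on $2K_2$-free graphs) along a different route.
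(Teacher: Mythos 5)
Your proposal takes exactly the paper's route: invoke Penrice's theorem that $tK_2$-free graphs without isolated vertices are $\alpha_{2t-2}$-dominated, observe that his argument can be made constructive in polynomial time (the paper likewise leaves this at the level of ``from a thorough reading of~\cite{penrice_1995} it can be seen''), and then apply Lemma~\ref{lem_alpha-dominated}. Your added care about the $t=1$/$K_1$ edge case and your counterexample showing that an arbitrary inclusion-minimal dominating set does not suffice are correct and welcome, but they do not change the argument, so this is essentially the same proof as the paper's.
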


It has been remarked that Prisner proved in \cite{prisner1995graphs} that, for any integer $t$, any $tK_2$-free graph has polynomially many maximal independent sets.
We were not able to verify this, because his article was inaccessible to us.
With this, the statement of Corollary~\ref{cor_p6_fork} already follows from Corollary~\ref{cor_exact_general}.

Theorem~\ref{thm_gen_p5} is a generalization of Theorem~\ref{thm_p5} and Lemma~\ref{lem_alpha-dominated}, and its proof is similar to the proof of Theorem~\ref{thm_p5}.

\begin{theorem} \label{thm_gen_p5}
Let $c$ be a fixed constant and $G$ be the input graph.
If a set $X$ of $k$ vertices such that $G-X$ is efficiently $\alpha_c$-dominated can be found in $\calO^*(f(k))$ time, then \independentcutset{} can be solved in $\calO^*(f(k) + 3^k)$ time.
\end{theorem}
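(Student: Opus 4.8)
The plan is to mirror the structure of the proof of Theorem~\ref{thm_p5}, replacing the two specific roles played by $P_5$-freeness (the Bacsó–Tuza/Camby–Schaudt dominating set of small independence number, and the fact that $G-X$ decomposes into components each admitting a small-independence-number dominating set) by the abstract hypothesis that $G-X$ is efficiently $\alpha_c$-dominated. First I would compute, in $\calO^*(f(k))$ time, the set $X$ of at most $k$ vertices with $G-X$ efficiently $\alpha_c$-dominated. Let $K_1,\dots,K_\ell$ be the components of $G-X$; for each $i$, compute in polynomial time an $\alpha_c$-dominating set $X_i$ of $G[K_i]$, so $\alpha(G[X_i])\le c$ and $X_i\cup X$ dominates $G[K_i\cup X]$. (Here one should note that since $c$ is a fixed constant and removing $X$ leaves $G-X$ in the $\alpha_c$-dominated class, each $K_i$ inherits the property; this is the mild point that needs a sentence, using that an induced subgraph of an $\alpha_c$-dominated graph need not be $\alpha_c$-dominated — so one must phrase the hypothesis as ``$G-X$ is efficiently $\alpha_c$-dominated as a graph,'' i.e. each component is, which is exactly what ``efficiently $\alpha_c$-dominated graph class'' gives when applied componentwise.)

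Then, assuming $G$ has an independent cutset $S^*$, I would split into the same two cases as in Theorem~\ref{thm_p5}. In \emph{Case 1} ($X\setminus S^*$ lies in at most one component of $G-S^*$), there is an $i\in[\ell]$ with $S^*\cap(K_i\cup X)$ an independent cutset of $G[K_i\cup X]$; iterate over all partitions $(A^*,X^*)$ of $X$ with $X^*$ independent, guess $S^*\cap X=X^*$, and run the algorithm of Theorem~\ref{thm_domindcs} on $G[K_i\cup X]$ with dominating set $X_i\cup X$, but restricted to partitions $(A,X')$ or $(A,B,X')$ of $X_i\cup X$ that refine the guess on $X$. The key counting point, exactly as in Corollary~\ref{cor_p5}, is that since $\alpha(G[X_i])\le c$ with $c$ constant, the number of partitions of $X_i$ into an independent part and $O(1)$ nonadjacent blocks is polynomial in $n$; combined with the $2^k$ (resp.\ $3^k$) guesses on $X$, the branch runs in $\calO^*(2^k)$ (resp.\ $\calO^*(3^k)$). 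In \emph{Case 2} ($X\setminus S^*$ meets at least two components of $G-S^*$), iterate over all partitions $(A^*,B^*,X^*)$ of $X$ with $X^*$ an independent cutset of $G[X]$ separating nonempty $A^*,B^*$; for each $i$, run the restricted version of the algorithm of Lemma~\ref{lem_domindcs2} on $G[K_i\cup X]$ with dominating set $X_i\cup X$, obtaining an independent cutset $S_i$ of $G[K_i\cup X]$ that separates $A^*$ from $B^*$ and contains $X^*$; then $\bigcup_{i\in[\ell]} S_i$ is an independent cutset of $G$. Correctness of the union follows because the $S_i$ all agree on $X$ (they all contain $X^*$ and are contained in $K_i\cup X$ with $K_i$ disjoint), they are pairwise nonadjacent across different $i$ since the $K_i$ are in distinct components of $G-X$, and together they separate $A^*$ from $B^*$ globally.

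The running time bookkeeping gives $\calO^*(f(k))$ for finding $X$, polynomial time per component for computing $X_i$ and for the restricted inner algorithms (the restriction to $O(1)$-independence partitions of $X_i$ keeps the partition enumeration polynomial, which is the whole point of the $\alpha_c$-domination hypothesis), times at most $3^k$ outer guesses on $X$, for a total of $\calO^*(f(k)+3^k)$. I expect the main obstacle — really the only non-routine point — to be Case 2's union argument: verifying that the locally computed cutsets $S_i$ glue into a single global independent cutset that separates $A^*$ and $B^*$. One must check that no path from $A^*$ to $B^*$ survives removal of $\bigcup_i S_i$; such a path, intersected with each $K_i\cup X$, would yield a sub-path avoiding $S_i$, contradicting that $S_i$ separates $A^*$ from $B^*$ in $G[K_i\cup X]$, provided one is careful that the path, restricted to $X$, stays on the $A^*$ or $B^*$ side consistently — which is guaranteed because $X^*=X\cap S_i$ is removed and $X^*$ separates $A^*$ from $B^*$ already inside $G[X]$. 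Everything else is a direct transcription of the proof of Theorem~\ref{thm_p5} with ``$P_5$-free component admits a dominating clique or $P_3$'' replaced by ``component admits an efficiently computable dominating set of independence number at most $c$.''
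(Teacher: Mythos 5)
Your proposal is correct and takes essentially the same route as the paper, which itself only states that the proof is obtained by transcribing the proof of Theorem~\ref{thm_p5} with the Bacs\'o--Tuza/Camby--Schaudt dominating clique or $P_3$ replaced by an $\alpha_c$-dominating set of each component of $G-X$; you correctly isolate the one point that makes this work, namely that $\alpha(G[X_i])\le c$ forces only $\calO(n^c\cdot 2^c)$ relevant partitions of $X_i$, keeping the inner calls polynomial. Your side remark that the hypothesis should be read componentwise (since a disconnected $G-X$ with many components cannot literally be $\alpha_c$-dominated) is a fair and correctly resolved quibble that the paper glosses over.
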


\bibliographystyle{plain}
\bibliography{algorithms}

\begin{thebibliography}{10}

\bibitem{aspvall_et_al_1979}
Bengt Aspvall, Michael~F. Plass, and Robert~Endre Tarjan.
\newblock A linear-time algorithm for testing the truth of certain quantified
  boolean formulas.
\newblock {\em Information Processing Letters}, 8(3):121--123, 1979.

\bibitem{bacso_tuza_1990}
Gabor Bacs{\'o} and Zsolt Tuza.
\newblock Dominating cliques in {$P_5$}-free graphs.
\newblock {\em Periodica Mathematica Hungarica}, 21(4):303--308, 1990.

\bibitem{blair_peyton_1993}
Jean R.~S. Blair and Barry Peyton.
\newblock An introduction to chordal graphs and clique trees.
\newblock In {\em Graph theory and sparse matrix computation}, pages 1--29.
  Springer, 1993.

\bibitem{BrDrLeSz}
Andreas Brandst{\"{a}}dt, Feodor~F. Dragan, Van~Bang Le, and Thomas Szymczak.
\newblock On stable cutsets in graphs.
\newblock {\em Discrete Applied Mathematics}, 105(1-3):39--50, 2000.

\bibitem{camby_schaudt_2016}
Eglantine Camby and Oliver Schaudt.
\newblock A new characterization of {$P_k$}-free graphs.
\newblock {\em Algorithmica}, 75(1):205--217, 2016.

\bibitem{chen_yu_2002}
Guantao Chen and Xingxing Yu.
\newblock A note on fragile graphs.
\newblock {\em Discrete Mathematics}, 249(1-3):41--43, 2002.

\bibitem{chvatal1984recognizing}
Vasek Chv{\'a}tal.
\newblock Recognizing decomposable graphs.
\newblock {\em Journal of Graph Theory}, 8(1):51--53, 1984.

\bibitem{corneil1993stable}
Derek~G Corneil and Jean Fonlupt.
\newblock Stable set bonding in perfect graphs and parity graphs.
\newblock {\em Journal of Combinatorial Theory, Series B}, 59(1):1--14, 1993.

\bibitem{Co00}
B.~Courcelle, J.~A. Makowsky, and U.~Rotics.
\newblock Linear time solvable optimization problems on graphs of bounded
  clique-width.
\newblock {\em Theory of Computing Systems}, 33(2):125--150, 2000.

\bibitem{Co}
Bruno Courcelle.
\newblock Graph rewriting: An algebraic and logic approach.
\newblock In Jan van Leeuwen, editor, {\em Handbook of Theoretical Computer
  Science, Volume {B:} Formal Models and Semantics}, pages 193--242. Elsevier
  and {MIT} Press, 1990.

\bibitem{cozzens_kelleher_1990}
Margaret~B. Cozzens and Laura~L. Kelleher.
\newblock Dominating cliques in graphs.
\newblock {\em Discrete Mathematics}, 86(1-3):101--116, 1990.

\bibitem{parameterized_algorithms_book}
Marek Cygan, Fedor~V. Fomin, {\L}ukasz Kowalik, Daniel Lokshtanov, D{\'a}niel
  Marx, Marcin Pilipczuk, Micha{\l} Pilipczuk, and Saket Saurabh.
\newblock {\em Parameterized algorithms}, volume~5.
\newblock Springer, 2015.

\bibitem{dallard_milanic_storgel_2022}
Clément Dallard, Martin Milanič, and Kenny Štorgel.
\newblock Treewidth versus clique number. {II}. {T}ree-independence number,
  2022.

\bibitem{farber_1989}
Martin Farber.
\newblock On diameters and radii of bridged graphs.
\newblock {\em Discret. Math.}, 73(3):249--260, 1989.

\bibitem{garey_johnson_1990}
Michael~R. Garey and David~S. Johnson.
\newblock {\em Computers and Intractability; A Guide to the Theory of
  \NP-Completeness}.
\newblock W. H. Freeman \& Co., USA, 1990.

\bibitem{johnson_et_al_1988}
David~S. Johnson, Mihalis Yannakakis, and Christos~H. Papadimitriou.
\newblock On generating all maximal independent sets.
\newblock {\em Information Processing Letters}, 27(3):119--123, 1988.

\bibitem{LeMoMu}
Van~Bang Le, Raffaele Mosca, and Haiko M{\"{u}}ller.
\newblock On stable cutsets in claw-free graphs and planar graphs.
\newblock {\em Journal of Discrete Algorithms}, 6(2):256--276, 2008.

\bibitem{LePf}
Van~Bang Le and Florian Pfender.
\newblock Extremal graphs having no stable cutset.
\newblock {\em Electronic Journal of Combinatorics}, 20(1):35, 2013.

\bibitem{LeRa}
Van~Bang Le and Bert Randerath.
\newblock On stable cutsets in line graphs.
\newblock {\em Theoretical Computer Science}, 301(1-3):463--475, 2003.

\bibitem{lokshtanov_et_al_2014}
Daniel Lokshtanov, N.~S. Narayanaswamy, Venkatesh Raman, M.~S. Ramanujan, and
  Saket Saurabh.
\newblock Faster parameterized algorithms using linear programming.
\newblock {\em {ACM} Trans. Algorithms}, 11(2):15:1--15:31, 2014.

\bibitem{marx_2010}
D{\'{a}}niel Marx.
\newblock Chordal deletion is fixed-parameter tractable.
\newblock {\em Algorithmica}, 57(4):747--768, 2010.

\bibitem{MaOsRa}
D{\'{a}}niel Marx, Barry O'Sullivan, and Igor Razgon.
\newblock Finding small separators in linear time via treewidth reduction.
\newblock {\em {ACM} Trans. Algorithms}, 9(4):30:1--30:35, 2013.

\bibitem{moon_moser_1965}
John~W. Moon and Leo Moser.
\newblock On cliques in graphs.
\newblock {\em Israel journal of Mathematics}, 3:23--28, 1965.

\bibitem{OUM2006514}
Sang-il Oum and Paul Seymour.
\newblock Approximating clique-width and branch-width.
\newblock {\em Journal of Combinatorial Theory, Series B}, 96(4):514--528,
  2006.

\bibitem{penrice_1995}
Stephen~G. Penrice.
\newblock Clique-like dominating sets, 1995.

\bibitem{prisner1995graphs}
Erich Prisner.
\newblock Graphs with few cliques.
\newblock {\em Graph theory, combinatorics, and algorithms}, 1:2, 1995.

\bibitem{queyranne_1995}
Maurice Queyranne.
\newblock A combinatorial algorithm for minimizing symmetric submodular
  functions.
\newblock In Kenneth~L. Clarkson, editor, {\em Proceedings of the Sixth Annual
  {ACM-SIAM} Symposium on Discrete Algorithms, 22-24 January 1995. San
  Francisco, California, {USA}}, pages 98--101. {ACM/SIAM}, 1995.

\bibitem{reed_smith_vetta_2004}
Bruce~A. Reed, Kaleigh Smith, and Adrian Vetta.
\newblock Finding odd cycle transversals.
\newblock {\em Operations Research Letters}, 32(4):299--301, 2004.

\bibitem{tucker1983coloring}
Alan Tucker.
\newblock Coloring graphs with stable cutsets.
\newblock {\em Journal of Combinatorial Theory, series B}, 34(3):258--267,
  1983.

\end{thebibliography}

\newpage
\appendix
\section{The Algorithm of Lemma~\ref{lem_domindcs2}} \label{app_domindcs2}

\begin{algorithm}[ht]
\SetKw{All}{all}
\SetKw{Continue}{continue}
\KwIn{A connected graph $G$, and a dominating set $X$ of $G$.}
\SetKwInOut{Parameter}{Parameter}
\Parameter{$k = |X|$.}
\KwOut{Either an independent cutset $S$ of $G$, or the statement ``no such independent cutset''.}
\BlankLine
\For({(i)}){\All disjoint partitions $(A, B, X')$ of $X$ such that $X'$ is an independent cutset of $G[X]$ separating the nonempty sets $A$ and $B$}{
	let $N \gets N_{G}(A) \cap N_{G}(B)$, $N_A \gets N_{G}(A) \setminus N$ and $N_B \gets N_{G}(B) \setminus N$\;
	let $H$ be the bipartite subgraph of $G$ induced by $N_A$ and $N_B$\;
	initialize $I \gets X' \cup N$ and $F \gets N_G(I) \setminus X$\;
	\While({(ii)}){$\exists uv \in E(H): u \in F, v \notin F \text{ and } v \notin I$}{
		update $I \gets I \cup \{v\}$ and $F \gets F \cup N_H(v)$\;
	}
	\If({(iii)}){$I$ is not an independent set of $G$ or $\exists uv \in E(H): u, v \in F$}{\Continue\;}
	let $F' \gets F \setminus (N_A \cup N_B)$\;
	let $G'$ be the graph obtained from $G$ by 1. contracting all components of $G[A]$ respective $G[B]$ to a single vertex, 2. inserting all possible edges between $N_G(K) \cap N_A$ and $N_G(K) \cap N_B$ for all components $K$ of $G[F']$, and 3. deleting all vertices of $I \cup F$\;
	construct a $2$-SAT formula $f_{G'}$ according to Lemma~\ref{lem_2sat} for $G'$\;
	\If({(iv)}){$f_{G'}$ is satisfiable}{
		construct an independent cutset $S'$ of $G'$ as in the proof of Lemma~\ref{lem_2sat}\;
		\Return $I \cup S'$\;
	}
	
}
\Return ``no such independent cutset''\;
\BlankLine
\caption{The algorithm of Lemma~\ref{lem_domindcs2}.} \label{alg_domindcs2}
\end{algorithm}

\section{Dynamic Programming for Theorem~\ref{thm_resind}} \label{app_dp}
Let $G$ and $\hat{\calT}$ be as in the statement of Theorem~\ref{thm_resind}, and let $r$ be the root of $T$.
Let $V_t$ denote the union of all sets $X_{t'}$ such that $t' \in V(T)$ is a descendant of $t$ or $t'=t$.
We call a disjoint partition $(S, A, B)$ of $X_t$ such that $S$ is an independent set of $G$ a \emph{potential $t$-partition}.

The Boolean variables $\cut[t, S, A, B]$ are computed as follows:
\begin{itemize}
\item If $t$ a leaf, then set $\cut[t, \emptyset, \emptyset, \emptyset] \gets \texttt{false}$.
\item If $t$ is an introduce node with child $t'$, let $v$ be the vertex introduced by $t$. 
For all potential $t'$-cuts $(S', A', B')$, set $\cut[t, S', A' \cup \{v\}, B']$ to \texttt{true}
\begin{itemize}
\item if $\cut[t', S', A', B'] = \texttt{true}$ and $N_{G}(v) \cap B' = \emptyset$,
\item if $\cut[t', S', A', B'] = \texttt{false}$, $A'=\emptyset$, $B'\neq\emptyset$ and $N_{G}(v) \cap B' = \emptyset$, or
\item if $\cut[t', S', A', B'] = \texttt{false}$, $A'=B'=\emptyset$ and some descendant of $t$ is a forget node.
\end{itemize}
Otherwise set it to \texttt{false}.
Set $\cut[t, S', A', B' \cup \{v\}]$ analogously. 
Furthermore, if $S' \cup \{v\}$ is an independent set of $G$, then set $\cut[t, S' \cup \{v\}, A', B']$ to $\cut[t', S', A', B']$.
\item If $t$ is a forget node with child $t'$, let $w$ be the vertex forgotten by $t$.
For all potential $t$-partitions $(S, A, B)$, set $\cut[t, S, A, B]$ to
\begin{itemize}
\item $\cut[t', S, A \cup \{w\}, B] \vee \cut[t', S, A, B \cup \{w\}] \vee \cut[t', S \cup \{w\}, A, B]$, if $S \cup \{w\}$ is an independent set of $G$, and
\item $\cut[t', S, A \cup \{w\}, B] \vee \cut[t', S, A, B \cup \{w\}]$, otherwise.
\end{itemize}
\item If $t$ is a join node with children $t'$ and $t''$, then for all potential $t$-partitions $(S, A, B)$, set $\cut[t, A, B, S]$ to
\begin{itemize}
\item $\cut[t', A, B, S] \wedge \cut[t'', A, B, S]$, if $A \neq \emptyset$ and $B \neq \emptyset$,
\item $\cut[t', A, B, S] \vee \cut[t'', A, B, 
S]$, if $A \neq \emptyset$ or $B \neq \emptyset$,
\item \texttt{true}, if $A = B = \emptyset$ and there are some descendants of both $t'$ and $t''$ that are forget nodes,
\item \texttt{false}, otherwise.
\end{itemize}
\end{itemize}

In Theorem~\ref{thm_resind} we made the following claim, which we will now prove.
\begin{claim}
The Boolean variable $\cut[t, S, A, B]$ is \textup{\texttt{true}} if and only if there is an independent cutset $S^* \supseteq S$ of $G$ such that $S^* \cap X_t = S$, and $A$ and $B$ are in distinct components of $G[V_t] - S^*$.
\end{claim}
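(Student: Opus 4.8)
The plan is to prove the claim by bottom-up induction on the tree $T$: assuming the stated characterization holds for $\cut[t',\cdot,\cdot,\cdot]$ at every child $t'$ of $t$, I establish it for $\cut[t,\cdot,\cdot,\cdot]$, in both directions. The base case is a leaf $\ell$, where $X_\ell=V_\ell=\emptyset$: the only potential $\ell$-partition is $(\emptyset,\emptyset,\emptyset)$, and $G[V_\ell]-S^*$ is the empty graph, which has no two distinct components, so the right-hand side is false and agrees with $\cut[\ell,\emptyset,\emptyset,\emptyset]=\texttt{false}$. Before the inductive step I would record two structural facts used throughout. First, at an introduce node $t$ introducing $v$ with child $t'$, every neighbor of $v$ lying in $V_t$ already lies in $X_{t'}$; this follows from (T2) together with (T3), since $v$ occurs in no bag of the subtree rooted at $t'$. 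Second, at a join node $t$ with children $t'$, $t''$ one has $V_{t'}\cap V_{t''}=X_t$ and there is no edge of $G$ between $V_{t'}\setminus X_t$ and $V_{t''}\setminus X_t$. I would also fix the reading of the statement when $A=B=\emptyset$: there, ``$A$ and $B$ are in distinct components of $G[V_t]-S^*$'' is to be understood as the existence of two distinct components of $G[V_t]-S^*$ both disjoint from $X_t$; such a bag-avoiding component can arise only after some vertex has been introduced and later forgotten inside the subtree rooted at $t$, which is exactly what the side conditions ``some descendant of $t$ is a forget node'' in the introduce and join rules account for (formally, $V_t\setminus X_t\neq\emptyset$ iff the subtree rooted at $t$ contains a forget node).

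For the inductive step I would go through the four node types and, in each, verify both implications. For the direction ``right-hand side $\Rightarrow\cut[t,S,A,B]=\texttt{true}$'', I start from a witnessing independent cutset $S^*$ of $G$ and the two designated components $C_A,C_B$ of $G[V_t]-S^*$ (with $C_A\cap X_t=A$, $C_B\cap X_t=B$), and read off from $S^*,C_A,C_B$ the corresponding data at the child(ren): at a forget node nothing changes in $V_t$ and one only re-bins the forgotten vertex $w$ into $S$, into $A$, or into $B$; at an introduce node one removes $v$ from $V_t$, distinguishing whether $v\in S^*$, or $v$ lies in $C_A$ (resp.\ $C_B$), and — crucially — whether $v$ has a neighbor in $A$ (resp.\ $B$) or instead becomes, after removing $S^*$, a fresh representative of a side that was previously realized only as a bag-avoiding component; at a join node one intersects $C_A,C_B$ with $V_{t'}$ and with $V_{t''}$ and uses the no-edge fact above to see that these restrictions are again (unions of) components with the same bag trace. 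For the converse, given $\cut[t,S,A,B]=\texttt{true}$, I unwind the recursion down the subtree to produce an independent set $S^*$ as the union of the $S$-parts selected along the way; independence is automatic, since any two adjacent vertices of $S^*$ must, by (T2) and (T3), lie together in a common bag and hence in a common independent set $S$; that $A$ and $B$ end up in distinct components of $G[V_t]-S^*$ in the above sense is then maintained by induction, using the no-edge fact at join nodes and the neighborhood-locality fact at introduce nodes to control when components merge.

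The main obstacle I anticipate is the accurate bookkeeping of hidden components and of the empty-side states at the introduce and join nodes. At an introduce node one must carefully separate the cases in which $v$ attaches to an already-present side, in which $v$ (isolated after removing $S^*$) becomes the representative of a side previously realized only as a bag-avoiding component, and in which $v$ would illegitimately merge the two sides (ruled out by $N_G(v)\cap B'=\emptyset$, respectively $N_G(v)\cap A'=\emptyset$), and one must match these against the side conditions on descendant forget nodes. At a join node the difficulty is that a single component of $G[V_t]-S^*$ may decompose, upon restriction to $V_{t'}$ or $V_{t''}$, into several components, so the ``designated-component'' invariant has to be phrased so that only the two sides carry bag vertices and the gluing of the two children's states along the shared bag $X_t$ reproduces exactly the $A,B$ trace; treating the three subcases ($A,B$ both nonempty, exactly one nonempty, both empty) uniformly is where the argument is most delicate. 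The remaining verifications — the forget node, independence, and the reduction of the root entry $\cut[r,\emptyset,\emptyset,\emptyset]$ to ``$G$ has an independent cutset'' via $X_r=\emptyset$ and $V_r=V(G)$ — are routine.
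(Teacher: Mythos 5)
Your plan follows the paper's proof essentially verbatim: induction over the nice tree decomposition with a case split on node type, using exactly the two separator facts you record (all neighbors of an introduced vertex inside $V_t$ lie in the child bag; no edges between $V_{t'}\setminus X_t$ and $V_{t''}\setminus X_t$ at a join node) and the same resolution of the empty-side states via the existence of forget-node descendants, i.e.\ $V_t\setminus X_t\neq\emptyset$. The only caveat is that what you defer as ``delicate bookkeeping'' and ``routine verification'' --- the explicit two-directional case analysis at introduce, forget and join nodes --- is the entire body of the paper's argument, so your text is a faithful and correctly prioritized outline of that proof rather than a self-contained replacement for it.
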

\begin{proof}
We prove the statement by induction on $T$. 
Let $t$ be a node of $T$. 
The base case is when $t$ is a leaf, which is clearly true. 
In the inductive case, the node $t$ is either an introduce, forget or join node. 
Let $(S, A, B)$ be a potential $t$-partition.
We call a set $S^*$ as in the statement a \emph{$(t, S, A, B)$-cutset}.

\emph{Case 1}: $t$ is an introduce node. 
Let $t'$ be the child of $t$, and let $v$ be the vertex introduced by $t$.
The set $X_t \cap X_{t'} = X_{t'}$ separates $V_{t'}$ and $V(G) \setminus V_{t'}$ in $G$.
In particular, the vertex $v$ has no neighbors in $V_{t'} \setminus X_{t'}$.
We will use this in the following without referring to it.

Assume that $\cut[t, S, A, B] = \texttt{true}$, and consider the case $v \in A$. 
If $\cut[t', S, A \setminus \{v\}, B] = \texttt{true}$, then $N_G(v) \cap B = \emptyset$. 
By the inductive hypothesis, there exists a $(t', S, A \setminus \{v\}, B)$-cutset. This is also a $(t, S, A, B)$-cutset.
If $\cut[t', S, A\setminus\{v\}, B] = \texttt{false}$, then $S$ is a $(t, S, A, B)$-cutset in both cases.
The case $v \in B$ is analogous. 
Consider the case $v \in S$. By the inductive hypothesis, there is a $(t', S \setminus \{v\}, A, B)$-cutset $S^*$. 
Since $S^* \cap X_{t'} = S \setminus \{v\}$, the set $S^* \cup \{v\}$ is an independent set of $G$.
So $S^* \cup \{v\}$ is a $(t, S, A, B)$-cutset.

For the opposite direction, assume that $S^*$ is a $(t, S, A, B)$-cutset. Consider the case $v \in A$. 
Then we have $N_G(v) \cap B = \emptyset$. If $\cut[t', S, A \setminus \{v\}, B]$ is \texttt{true}, then so is $\cut[t, S, A, B]$. 
Otherwise, by the induction hypothesis, the set $S^*$ is not a $(t', S, A \setminus \{v\}, B)$-cutset. 
This implies $A = \{v\}$.
Since $S^*$ is a $(t, S, A, B)$-cutset, we must have $N_G(v) \cap B = \emptyset$. 
Furthermore, since $V_{t'} \setminus (S^* \cup A)$ is nonempty, we have $B \neq \emptyset$ or there is some already forgotten vertex in $V_{t'} \setminus (S^* \cup A)$. 
Therefore, the algorithm sets $\cut[t, S, A, B]$ to \texttt{true}. 
The case $v \in B$ is similar. 
Consider the case $v \in S$.
Here $S^* \setminus \{v\}$ is a $(t', S \setminus \{v\}, A, B)$-cutset.
Therefore $\cut[t', S \setminus \{v\}, A, B] = \texttt{true}$ by the induction hypothesis, and the algorithm sets $\cut[t, S, A, B]$ to \texttt{true} too.

\emph{Case 2}: $t$ is a forget node. 
Let $t'$ be the child of $t$, and let $w$ be the vertex forgotten by $t$.
Here $X_t \cap X_{t'} = X_t$ separates $V_{t'}$ and $V(G) \setminus V_{t'}$ in $G$.

Assume that $\cut[t, S, A, B] = \texttt{true}$. 
Then $\cut[t', S, A \cup \{w\}, B]$, $\cut[t', S, A, B\cup \{w\}]$ or, if $S \cup \{w\}$ is an independent set of $G$, $\cut[t', S \cup \{w\}, A, B]$ is \texttt{true}. 
By the induction hypothesis, there is a $p$-cutset, where $p$ is the respective 4-tuple. 
This is $p$-cutset is also a $(t, S, A, B)$-cutset.

For the other direction, assume that $S^*$ is a $(t, S, A, B)$-cutset. 
If $w \in S^*$, then $S^*$ is a $(t', S \cup \{w\}, A, B)$-cutset. 
Otherwise, the vertex $w$ cannot have neighbors both in $A$ and in $B$. 
Therefore $S^*$ is a $(t', S, A \cup \{w\}, B)$-cutset or a $(t', S, A, B \cup \{w\})$-cutset.
By induction, one of the variables $\cut[t', S, A \cup \{w\}, B]$, $\cut[t', S, A, B \cup \{w\}]$ or, if $S \cup \{w\}$ is an independent set of $G$, $\cut[t', S \cup \{w\}, A, B]$ is \texttt{true}, and the algorithm sets $\cut[t, S, A, B]$ to \texttt{true}.

\emph{Case 3}: $t$ is a join node with children $t'$ and $t''$.
Here $X_t = X_{t'} = X_{t''}$ separates $V_{t'}$, $V_{t''}$ and $V(G) \setminus V_t$ in $G$.

Assume that $\cut[t, S, A, B] = \texttt{true}$.
Consider the case that $A \neq \emptyset$ or $B \neq \emptyset$, and without loss of generality let $A \neq \emptyset$.
If $B \neq \emptyset$, then $\cut[t', S, A, B] = \cut[t'', S, A, B] = \texttt{true}$. By induction there is a $(t', S, A, B)$-cutset $S_1^*$, and a $(t'', S, A, B)$-cutset $S_2^*$.
The union $S_1^* \cup S_2^*$ is a $(t, S, A, B)$-cutset. 
If $B = \emptyset$, then without loss of generality let $\cut[t', S, A, B] = \texttt{true}$. 
By the induction hypothesis, there is a $(t', S, A, B)$-cutset. 
This is also a $(t, S, A, B)$-cutset.
Now consider the case that $A = B = \emptyset$. By the definition of $\cut[t, S, A, B]$, the set $S = X_t$ an independent set of $G$ and it is a $(t, S, A, B)$-cutset.

For the other direction, let $S^*$ be a $(t, S, A, B)$-cutset.
If $A \neq \emptyset$ and $B \neq \emptyset$, then $S^* \cap V_{t'}$ is a $(t', S, A, B)$-cutset, and $S^* \cap V_{t''}$ is a $(t'', S, A, B)$-cutset.
By induction, $\cut[t', S, A, B] = \cut[t'', S, A, B] = \texttt{true}$.
If exactly one of the sets $A$ or $B$ is empty (but not both), then one component of $G - S^*$ must be in $V_{t'} \setminus X_t$ or in $V_{t''} \setminus X_t$.
Without loss of generality let the former be the case.
This implies that $S^* \cap V_{t'}$ is a $(t', S, A, B)$-cutset.
By induction, $\cut[t', S, A, B] = \texttt{true}$.
If $A = B = \emptyset$ and $\cut[t', S, A, B] = \cut[t'', S, A, B] = \texttt{false}$, then $S^*$ is not a $(t', S, A, B)$-cutset nor a $(t'', S, A, B)$-cutset by induction.
Since it is a $(t, S, A, B)$-cutset, one component of $G - S^*$ must be in $V_{t'} \setminus X_t$, and another component of $G - S^*$ must be in $V_{t''} \setminus X_t$.
In particular, both $t'$ and $t''$ must have a forget node as a descendant.
Altogether, the algorithm sets $\cut[t, S, A, B]$ to \texttt{true}.

The proof of the claim is complete.
\QED \end{proof}

\end{document}